\DeclareMathOperator{\var}{var}
\newcommand{\imp}{\rightarrow}
\newcommand{\Ra}{\Rightarrow}
\newcommand{\qo}{\rightbarharpoon_{\Omega}}  %changed (TIMO)
\newcommand{\qohyp}{\preccurlyeq_{\text{hyp}}}
\newcommand{\maj}{\preccurlyeq_{maj}}
\renewcommand{\min}{\preccurlyeq_{min}}
\newcommand{\pow}[1]{P_f({\mathbb{N}^{#1}})}
\newcommand{\powf}[1]{P_f(#1)}
\newcommand{\nn}{\mathbb{N}}
\newcommand{\minimum}{\text{min}}
\newcommand{\mply}{.}
\newcommand{\maxelesize}[1]{\langle#1\rangle}
\renewcommand{\max}{\text{max}}
\newtheorem{theorem}{Theorem}
\newtheorem{example}[theorem]{Example}
\newtheorem{lemma}[theorem]{Lemma}
\newtheorem{definition}[theorem]{Definition}
\newcommand{\WI}{\text{WI}}
\newcommand{\fus}{\cdot}
\newcommand{\FLe}{\text{FL}_{\text{e}}}
\newcommand{\HFLe}{\text{HFL}_{\text{e}}}
\newcommand{\FLec}{\text{FL}_{\text{ec}}}
\newcommand{\FLelw}{\text{FL}_{\text{elw}}}
\newcommand{\FLecm}{\text{FL}_{\text{ecm}}}
\newcommand{\FLew}{\text{FL}_{\text{ew}}}
\newcommand{\FLecw}{\text{FL}_{\text{ecw}}}
\newcommand{\HFLec}{\text{HFL}_{\text{ec}}}
\newcommand{\HFLew}{\text{HFL}_{\text{ew}}}
\newcommand{\HFLelw}{\text{HFL}_{\text{elw}}}
\newcommand{\h}{H}
\newcommand{\reduct}[1]{r(#1)}
\newcommand{\DC}{\Ra}
\newcommand{\VL}{\,|\,}
\newcommand{\N}{\mathcal{N}}
\renewcommand{\P}{\mathcal{P}}
\newcommand{\landone}[1]{(#1)_{\land 1}}
\newcommand{\C}{\mathcal{C}}
\newcommand{\minus}[1]{#1^{-}}
\newcommand{\MTL}{\text{MTL}}
\newcommand{\WNM}{\text{WNM}}
\newcommand{\hypsize}[1]{\langle#1\rangle}
\newcommand{\thin}{thin}
\newcommand{\slim}{slim}
\author{
A. R. Balasubramanian\\
Technische Universit\"{a}t Munchen\\
\texttt{bala.ayikudi@tum.de}
\and
Timo Lang\\
Technische Universit\"{a}t Wien\\
\texttt{timo@logic.at}
\and
Revantha Ramanayake\\
University of Groningen\\
\texttt{d.r.s.ramanayake@rug.nl}
}
\date{}
\begin{document}

%% Title information
\title{Decidability and Complexity in Weakening and Contraction Hypersequent Substructural Logics\footnote{This is the authors' version of the work. It is posted here for your personal use. Not for redistribution.}}

\maketitle

\begin{abstract}
We establish decidability for the infinitely many axiomatic extensions of the commutative Full Lambek logic with weakening $\FLew$ (i.e. $\text{IMALLW}$) that have a cut-free hypersequent proof calculus (specifically: every analytic structural rule extension). 
Decidability for the corresponding extensions of its contraction counterpart $\FLec$ was established recently but their computational complexity was left unanswered. In the second part of this paper, we introduce just enough on length functions for well-quasi-orderings and the fast-growing complexity classes to obtain complexity upper bounds for both the weakening and contraction extensions.
A specific instance of this result yields the first complexity bound for the prominent fuzzy logic $\text{MTL}$ (monoidal t-norm based logic) providing an answer to a long-standing open problem.
\end{abstract}

\section{Introduction}

Logical systems (or simply, logics) model the reasoning that applies within various concepts. Two familiar examples are classical logic (modelling truth) and intuitionistic logic (modelling constructive proof).
A \textit{substructural logic} lacks some of the properties (`structural rules') of these logics. The most notable structural rules are weakening (insert an arbitrary hypothesis), contraction (delete a copy of a hypothesis that occurs multiple times), commutativity/exchange (swap the position of hypotheses), and associativity. 

Let us demonstrate the motivation for omitting structural rules via some examples. Interpret ``$\$,\$\Ra \text{ticket}$" as ``with two dollars I can buy the ticket". Applying the contraction rule to the latter would yield ``$\$\Ra\text{ticket}$" i.e.\ ``with one dollar I can buy the ticket".
Typically we would not want to permit this inference. Hence we would want to reject the contraction rule when we have such a resource conscious interpretation in mind. Meanwhile, if we interpret ``$P\Ra C$" as ``$C$ truly depends on $P$'' then from this we would not want to infer ``$P,Q\Ra C$"  (i.e.\ ``$C$ truly depends on $P$ together with $Q$''). Hence we would want to reject the weakening rule when we have such an interpretation in mind.

Put simply, the choice of which structural rules to retain and the assertion of further properties (axioms) lead to an infinite number of different substructural logics which are able to model a wide range of notions. Indeed, concerning the axioms, a truly astonishing number have been classified and studied in different contexts by various research communities.

Owing to their versatility, substructural logics, along with modal logics, provide a powerful tool for modelling and reasoning in computer science. Linear logic and its many variants (computational and resource consciousness), extensions of the Lambek calculus (syntax and syntactic types of natural language, context-free grammars, computational linguistics), fuzzy logics (formal reasoning about vagueness, fuzzy systems modelling), and bunched implication logics (software program verification, static analysis of run-time memory allocation) are just a few examples.

Throughout this work, we identify a logic with the set of formulas representing its properties i.e. its theorems. 
A logic is \textit{decidable} if there is an algorithm that can determine if an input formula is a theorem of the logic. For a decidable logic, the natural question is its \textit{computational complexity}: how much time and space is required to run the algorithm as a function of the input size? 
Decidability and complexity are such fundamental properties that these questions must surely have been raised for every single logic that has been studied.

A prominent method of studying a logic is via proof theory, by using a proof calculus. A proof calculus is a mathematical object that generates (finite) proofs of exactly those formulas that are theorems of the logic. 
%Given a formula, the aim is to show that it is possible in finite time to determine whether or not it has a proof in the proof calculus.
 The typical approach to establish decidability is to attempt to build a proof backwards from the input and abort at some finite point if no proof exists (`backward proof search'). This approach relies on a proof calculus whose proofs have nice properties: most crucially, unknown formulas should not appear when building the proof backwards.
Typically one asks that the proof calculus has the famed \textit{subformula property}: every formula that occurs in a proof is a subformula of the formula being proved.
 
The sequent calculus is the type of proof calculus introduced by Gentzen in 1935. It consists of a finite set of proof rules defined on \textit{sequents}. A sequent generalises a formula and takes the form $X\Ra\Pi$ ($X\cup \Pi$ is a finite multiset of formulas and $\Pi$ contains at most one formula).
The sequent calculus for commutative Full Lambek logic (logic and proof calculus are both denoted~$\FLe$), equivalently intuitionistic multiplicative additive linear logic, is the starting point for our discussion. The proof rules for contraction and weakening are as follows. The comma is read as multiset sum following convention.
\begin{center}
\begin{tabular}{cc}
\AxiomC{$X,A,A\Ra\Pi$}
\RightLabel{(c)ontraction}
\UnaryInfC{$X,A\Ra\Pi$}
\DisplayProof
&
\AxiomC{$X\Ra\Pi$}
\RightLabel{(w)eakening}
\UnaryInfC{$X,Y\Ra\Pi,\Pi'$}
\DisplayProof
\end{tabular}
\end{center}
As simple as it looks, the contraction rule is famously hard to control\footnote{Girard~\cite{Gir95} calls it the ``fingernail of infinity in the propositional calculus".} in backward proof search because the premise (the sequent above the horizontal line) is larger than the conclusion. In contrast, weakening is much easier to handle. Indeed, compare Kripke's~\cite{Kri59} famous proof of decidability for $\FLe+(c)=\FLec$ (the logic was shown to be non-primitive recursive~\cite{Urq99}) with the straightforward decidability argument for $\FLe+(w)=\FLew$ (PSPACE-complete~\cite{HorTer11}). However, when it comes to \textit{extensions of} $\FLec$ and $\FLew$, it is the former that are easier to control. This is because the contraction rule can be used to prune the backward proof search tree by dismissing the infinitely many sequents larger than a given sequent (structural proof theorists would say: using height-preserving admissibility of contraction). An analogous use of the weakening rule dismisses the sequents smaller than a given sequent, and there are only finitely many of these. 

The above is illustrated in the following backward proof search tree from $A^2,B^3 \Ra F$ (i.e. $A,A,B,B,B\Ra F$), which makes use of a rule which we shall call (scom).%
\footnote{Read bottom-up, (scom) doubles the multiplicity of each formula in the conclusion antecedent and distributes the formulas between two premises. (scom) can be seen as a sequent version of the \textit{communication rule} (com). The latter is prominent in the proof theory of G\"odel logic (see, e.g., \cite{MetOliGab09}). 
}
\begin{center}
\begin{small}
\AxiomC{$A^3,B^1\Ra F$\hspace{-0.5cm}}
\AxiomC{$\ddag$ $A^4,B^2\Ra F$}
\AxiomC{$A^2,B^4\Ra F$}
\RightLabel{scom}
%\BinaryInfC{$A^3,B^3\Ra F | A^3,B^3\Ra F$}
%\RightLabel{EC}
\BinaryInfC{$\dag$ $A^3,B^3\Ra F$}
\RightLabel{scom}
%\BinaryInfC{$A^3,B^2\Ra F | A^3,B^2\Ra F$}
%\RightLabel{EC}
\BinaryInfC{$\ddag$ $A^3,B^2\Ra F$}		
\AxiomC{\hspace{-2cm}$A,B^4\Ra F$}
\RightLabel{scom}
%\BinaryInfC{$A^2,B^3\Ra F | A^2,B^3\Ra F$}
%\RightLabel{EC}
\BinaryInfC{$\dag$ $A^2,B^3 \Ra F$}
\DisplayProof
\end{small}
\end{center}
In the presence of contraction, it can be argued that it is possible to obtain a finite backward proof search tree by dismissing any sequent that is contractible (i.e. repeated applications of the contraction rule) to some sequent below it.\footnote{An \textit{everywhere minimal proof} is a proof whose every subproof has minimal height. Suppose a proof~$d$ contains $A^3,B^3\Ra F$ somewhere above $A^2,B^3 \Ra F$. The former is contractible to the latter so any proof~$d_{1}$ of $A^3,B^3\Ra F$ yields a proof~$d_{1}'$ \textit{of the same height} of $A^2,B^3 \Ra F$ by height-preserving admissibility of contraction. It follows that $d$ is not everywhere minimal, since its subproof of $A^2,B^3 \Ra F$ has greater height than~$d_{1}'$. Since it can be shown that every provable sequent has an everywhere minimal proof, it is safe to disregard~$d$, and hence also a backward proof search tree containing a sequent contractible to some sequent below it.}
For example, we can dismiss $A^3,B^3\Ra F$ because it is contractible to $A^2,B^3\Ra F$.
This is the basis of the decidability argument for $\FLec$. 
On the other hand, no sequent in the tree can be weakened to obtain any sequent below it.
So if contraction is replaced by weakening there is no apparent justification to prohibit the backward proof search from extending the tree indefinitely by obtaining $A^{2+n},B^3\Ra F$ ($\dag$) and $A^{3+n},B^2\Ra F$ ($\ddag$) for every~$n$.
This is not an issue for $\FLew$ since its proof calculus does not contain a rule like (scom). 
It \textit{is} an issue for many of its extensions; $\FLew+(scom)$ is just one example.
%The question ``what is the maximum height of this tree?" meanwhile illuminates the path to computing a complexity upper bound.

There is in fact a significant challenge to be overcome \textit{before} pondering how to handle the backward proof search tree: finding a proof calculus \textit{with} the subformula property for the logic of interest. This is a major preoccupation of structural proof theory. Here we are in luck because Ciabattoni \textit{et al.}~\cite{CiaGalTer08} constructed \textit{hypersequent calculi} with the subformula property for an infinite set of substructural logics, and many logics of interest are in this set. In this type of proof calculus, proof rules are defined on hypersequents (multisets of sequents). We refer to such logics as \textit{hypersequent substructural logics}.
%Although hypersequents have been noted to further complicate the proof of decidability (see~\cite{Met}), 

As a consequence of their additional structure, decidability arguments are ``further complicated"~\cite{MetOliGab09} when hypersequent calculi are employed (this will be evidenced later on in the formal complexity analysis). Nevertheless, Ramanayake~\cite{Ram20LICS} recently showed the decidability of every hypersequent substructural logic extending $\FLec$ by extending the argument sketched above to hypersequent calculi. However that proof was not constructive enough to obtain complexity bounds.

\textbf{In this work} we establish the decidability for every hypersequent substructural logic extending $\FLew$.
Since backward proof search seemed fruitless, we develop a forward strategy--- unlike the former there is no standard methodology---where only a limited amount of weakening is permitted above a premise of a rule; anything more must be applied after its conclusion.
We define a well-quasi-ordering on hypersequents and use this to show that only finitely many forward steps are required to determine if the input has a proof.  In the second part of the paper we obtain complexity upper bound for these logics. Finally, we refine the argument in~\cite{Ram20LICS} to extract complexity upper bounds also for the hypersequent substructural logics extending $\FLec$.

Related work: in addition to the already mentioned~\cite{Ram20LICS}, the following works present decidability and complexity results that apply to multiple classes of extensions of~$\FLe$. Each makes significant concessions: Galatos and Jipsen~\cite{GalJip13} and St. John~\cite{StJ19} 
% and Ciabattoni \textit{et al.}~\cite{CiaLelOlaPim17} 
consider very specific axiom forms and the extensions are restricted to sequent calculi; Ciabattoni \textit{et al.}~\cite{CiaLanRam19} consider hypersequent substructural logics but only for extensions of $\FLecm$ ($m$ is the mingle axiom and it is a specific instance of weakening).

We conclude by expanding on three further aspects:

\subsubsection*{Complexity} Urquhart~\cite{Urq99} gave tight Ackermannian bounds $\mathbf{F}_{\omega}$ for $\FLec$. To understand how such an upper bound arises, consider the backward proof search tree above. We noted that $A^3,B^3\Ra F$ can be dismissed from the tree because it is contractible to a sequent $A^2,B^3 \Ra F$ below it. What sequents can we \textit{not} dismiss with respect to $A^2,B^3 \Ra F$? Certainly the finitely many smaller sequents $A^1,B^3 \Ra F$, $A^2,B^2 \Ra F$, $\ldots$. Also sequents that are incomparable to it, such as $A^1,B^{100} \Ra F$. Although there are infinitely many such sequents, there is still hope! After all, only finitely many can be obtained in a single step since there is a fixed polynomial bounding the size of a premise from every proof rule in terms of the size of its conclusion. It turns out that this is enough to bound the height of the tree. Equivalently, there is an upper bound on the maximum length of non-increasing non-constant (`bad') sequences under the componentwise ordering (\textit{length function theorem}).
(For the lower bounds, Urquhart encodes a problem with known complexity into the logic.)

We obtain a hyper-Ackermannian $\mathbf{F}_{\omega^{\omega}}$-upper bound for hypersequent substructural extensions of~$\FLec$ and~$\FLew$, by exploiting Balasubramanian's~\cite{Bal20} recent upper bounds for bad sequences on the majoring and minoring orderings. Of course, many specific logics among these extensions---like intuitionistic logic~$\FLecw$---are known to have much more modest bounds. 

Since the above complexity classes and methods are less widely known, we provide a gentle introduction in Section~\ref{sec-upper-bound}.

\subsubsection*{Well-quasi-orderings (wqo)} Wqos play an important role in computer science, for example to show the termination of algorithms and term-rewriting systems. This work is far from the first to use them within logic.
Nevertheless, this work illustrates a methodology for proof theory: use the structural proof theory to present the logic in a suitable form---e.g. height-preserving admissibility lemmas, rule permutations, absorbing one rule into another---in order that a wqo can be identified on the basic units (sequents, hypersequents, \ldots) of the proof calculus. Now utilise the wqo to obtain decidability and complexity. 
To emphasise this methodology, we take care to separate the contributions from the proof calculus and from the wqo length function bound in the complexity calculation in Section~\ref{sec-upper-bound}.

\subsubsection*{Fuzzy logics} Many of the logics covered by this result are of independent interest, and the others are candidates for future applications. The latter is pertinent due to the widespread applicability of the Lambek calculus and its extensions. For many of the logics, these are the first decidability and complexity results. As we shall see, these observations are especially relevant for mathematical fuzzy logics, which provide a formal basis for some of Zadeh's fuzzy logics~\cite{Zah65}, and fuzzy systems modelling (Yager and collaborators e.g.~\cite{YagRyb96}).

The monoidal t-norm based logic $\MTL$  was introduced by Esteva and Godo in 2001~\cite{EstGod01}. It is axiomatised by extending $\FLew$ with prelinearity $(p\imp q)\lor (q\imp p)$. 

To explain the prominence of this logic we need to delve into the design of a mathematical fuzzy logic. H\'ajek's~\cite{Haj98} insight was that the definition of the fuzzy conjunction ($\cdot$ in the notation of this paper) is the crucial building block for developing a formal framework for fuzzy logics. He proposed to use a t(riangular)-norm---also used in Zadeh's fuzzy sets---on the $[0,1]$-unit interval for the fuzzy conjunction because it supported the desired philosophical desiderata. This interval is interpreted as the truth degrees with $1$ read as classical truth and $0$ as classical false. 
Left-continuity is also demanded of the t-norm since this is necessary and sufficient for the residuum to be defined: this becomes the fuzzy implication.
Meanwhile, from the syntactic perspective, weakening is the crucial ingredient for axiomatising t-norms.

$\MTL$ is prominent because of the importance of the above features to fuzzy logics. Specifically, it describes the common behaviours of \textit{all} fuzzy logics based on left-continuous t-norms. Indeed, Horc{\'{\i}}k \textit{et al.}~\cite{HorNogPet07} observe  ``[$\MTL$ is] the weakest fuzzy logic\footnote{It should be noted that there are also other candidates for this title: uninorm logic~\cite{MetMon07} or the weakly implicative semilinear logics~\cite{CinNog10}.} and the research on fuzzy logic systems becomes research on extensions of $\MTL$".

This work presents the first syntactic proof of decidability for $\MTL$. The existing proof of decidability---due to Ono following the argument in~\cite{BlovanAlt02}; see~\cite{CiaMetMon10} for a proof---relied on the algebraic semantics and there was ``no known complexity bound"~\cite{MetOliGab09} (see also the Handbook chapter~\cite{Han11}). Hanikov\'{a}~\cite{Han17} observes {``we would like to know this for $\MTL$, whose complexity is a long-standing open problem within propositional fuzzy logics"}. We answer this open problem by presenting the first upper bound for $\MTL$. 

The results also apply to several genuinely fuzzy (`standard complete') axiomatic extensions of~$\MTL$ studied in the past two decades, for example:
\begin{itemize}
\item $n$-contractive extensions $\text{C}_n\MTL = \MTL + p^{n-1} \imp p^n$ ($n\geq 2$); 
%standard completeness~\cite{CiaEstGod02}, 
decidable~\cite{HorNogPet07}
\item weak nilpotent minimum logic $\WNM = \MTL + \lnot(p\fus q)\lor ((p\land q)\imp p\fus q)$; 
%standard completeness~\cite{EstGod01}, 
decidable~\cite{NogEstGis08}, co-NP-completeness for every finitely axiomatisable extensions of~$\WNM$~\cite{EstGodNog10}
\item strict monoidal t-norm based logic $\MTL+(p\land\lnot p)\imp 0$; 
%standard completeness~\cite{EstGisGodMon02}, 
decidable~\cite{CiaMetMon10}
\item $\MTL+(wmn)^n$ where $(wmn)^n:= \lnot(p\fus q)^{n}\lor ((p\land q)^{n-1}\imp (p\fus q)^{n})$ and $n\geq 2$
%; standard completeness~\cite{BalCiaSpe12}
\end{itemize}
Even in the cases where decidability was already known, the above works use distinctive algebraic semantic arguments. 
In contrast, the argument in this work applies in one shot to all of these classes. Decidability for $\MTL+(wmn)^n$ is new. So are the complexity bounds for $\text{C}_n\MTL$ and $\MTL+(wmn)^n$.

\section{Preliminaries}

Let~$|\Omega|$ denote the cardinality of the set~$\Omega$.
A \textit{multiset} of a set~$A$ is a map $M:A\mapsto\mathbb{N}$;
$M(a)$ is called the \textit{multiplicity} of the element~$a\in A$.
The multiset is finite if only finitely many elements of~$A$ have positive multiplicity.
The cardinality of a finite multiset is the sum of the multiplicities of its elements.
The sum of multisets~$M_{1}$ and~$M_{2}$ (of some set~$A$) is the multiset given by the map $a\mapsto M_{1}(a)+M_{2}(a)$ for $a\in A$.

Let~$\mathsf{Var}$ be a countably infinite set of \textit{propositional variables}.
Logical formulas are defined by the following grammar.  
\[
\mathsf{Form}:=p\in \mathsf{Var}|\top|\bot|1|0|(F\land F)|(F\lor F)|(F\fus F)|(F\imp F)
\]
The connective~$\fus$ is called \textit{fusion} or \textit{times}. It is also called multiplicative conjunction to contrast it with the additive conjunction~${\land}$. We often omit leading parentheses to reduce clutter, e.g. writing $(\top\land (\bot\lor\bot))$ as $\top\land (\bot\lor\bot)$.

A \textit{logic}~$L$ is a set of formulas (`theorems') from~$\mathsf{Form}$ that is closed under the uniform substitution of formulas for propositional variables, and closed under \textit{modus ponens}: $A\in L$ and $A\imp B\in L$ implies $B\in L$.
The \textit{axiomatic extension}~$L+\mathcal{F}$ of the logic~$L$ by a finite set~$\mathcal{F}$ of formulas is defined in the usual way as 
the smallest logic containing~$L\cup \mathcal{F}$.

%The meaning of the connectives in the context of substructural logics is conveyed well through its algebraic semantics of residuated lattices~\cite{GalJipKowOno07}: $\land,\lor$ are the familiar lattice connectives; $\top$ and~$\bot$ are respectively the top and bottom element of the lattice; $\fus$ is a monoidal operation with unit~$1$, and~$0$ is an arbitrary element of the lattice; the implication connective~$\imp$ is residuated with~$\fus$.

\subsection{Basic definitions from structural proof theory}

A \textit{sequent} is a tuple written~$X\Ra\Pi$ where~$X$ (the \textit{antecedent}) is a finite multiset of formulas and~$\Pi$ (the \textit{succedent}) is a multiset that contains at most a single formula.

A \textit{hypersequent} is a finite multiset (possibly empty) of sequents.
It is often explicitly written as a list of $|$-separated sequents as follows:
\begin{equation}\label{hypersequent}
X_{1} \Ra  \Pi_{1} | \ldots | X_{n} \Ra  \Pi_{n}
\end{equation}
Each sequent~$X_{i}\Ra \Pi_{i}$ is said to be a \textit{component} of the hypersequent.
In practice, ``sequent" and ``component" are often used interchangeably.

Define~$\hypsize{h}$ to be the number of symbols in the standard written representation of the hypersequent~$h$. For example, $\hypsize{p,p\Ra |\Ra q\land p}=9$. Nevertheless the precise details of the counting do not matter for this paper.
%Let~$X\sqcup Y$ denote the multiset union of~$X$ and~$Y$. 

%We provide below a brief exposition of the proof-theoretic terminology.
%For the benefit of the reader unfamiliar with proof theory:
%As an aid to the reader unfamiliar with proof theory, we provide a brief exposition of the terminology below:

A \textit{hypersequent calculus} is a type of formal proof calculus (introduced independently in~\cite{Min68,Pot83,Avr87}) that is used to generate proofs (`derivations') of hypersequents. It is a generalisation of the \textit{sequent calculus} introduced by Gentzen~\cite{Gen69}. Formally, a hypersequent calculus is a finite set of \textit{hypersequent rule schemas} of the following form
where $h_{0}$ is the \textit{conclusion} and $h_1,\ldots,h_n$ ($n\geq 0$) are the \textit{premise(s)}.
\begin{center}
\AxiomC{$h_1$}
\AxiomC{$\ldots$}
\AxiomC{$h_n$}
\TrinaryInfC{$h_{0}$}
\DisplayProof
%\quad
%\text{or}
%\quad
%\AxiomC{}
%\UnaryInfC{$G$}
%\DisplayProof
\end{center}
Each~$h_{i}$ is called a \textit{schematic-hypersequent} and has the following form for $k\geq 0$:
\[
H | \mathcal{L}_{1}\Ra \mathcal{M}_{1} | \cdots | \mathcal{L}_{k}\Ra \mathcal{M}_{k}
\]
In the above: $H$~is the \textit{hypersequent-variable}, each~$\mathcal{L}_{i}$ is a list comprising of \textit{multiset-variables}, \textit{formula-variables}, and \textit{schematic-formulas} (itself built from the logical connectives and constants using formula-variables), and each~$\mathcal{M}_{i}$ is either empty, a \textit{succedent-variable}, or a \textit{schematic-formula}. 
%Each $\mathcal{L}_{j}\Ra\mathcal{M}_{j}$  is called an \textit{active component} of the rule.

A rule schema with no premises is an \textit{initial rule schema}. 
A rule schema comprising of just a hypersequent-variable, multiset-variables, and succedent-variables is called a \textit{structural rule schema}.
Define~$\hypsize{r}$ to be the number of symbols in the standard written representation of the rule schema.
\begin{example}\label{eg-landR}
The rule schema~(${\land}$R) below is not a structural rule since it contains formulas-variables ($A,B$) and indeed a schematic-formula ($A\land B$) as well.
%Note $\hypsize{\text{(${\land}$R})}=10$; $\hypsize{(\text{com})}=15$.
%\begin{center}
%\begin{tabular}{cc}
\[
\AxiomC{$\h |  X\DC A$}
\AxiomC{$\h |  X\DC B$}
\RightLabel{(${\land}$R)}
\BinaryInfC{$\h |  X\DC A\land B$}
\DisplayProof
\]
The following is an example of a structural rule schema. 
\[
\AxiomC{$\h |  X_{1},Y_{1}\Ra\Pi_{1}$}
\AxiomC{$\h |  X_{2},Y_{2}\Ra\Pi_{2}$}
\RightLabel{(com)}
\BinaryInfC{$\h |  X_{1},Y_{2}\Ra\Pi_{1} | X_{2},Y_{1}\Ra\Pi_{2}$}
\DisplayProof
\]
%It has two premises $\h |  X\DC A$ and $\h |  X\DC B$, and its conclusion is~$\h |  X\DC A\land B$.
%
%The rule schema~(${\land}$R) is not a structural rule since it contains the formula-variables~$A$ and~$B$ and the compound-structure-variable~$A\land B$. Here are two examples of structural rule schema.
%\begin{center}
%\begin{tabular}{cc}
%\AxiomC{$\h |  X_{1},Y_{1}\Ra\Pi_{1}$}
%\AxiomC{$\h |  X_{2},Y_{2}\Ra\Pi_{2}$}
%\RightLabel{(com)}
%\BinaryInfC{$\h |  X_{1},Y_{2}\Ra\Pi_{1} | X_{2},Y_{1}\Ra\Pi_{2}$}
%\DisplayProof
%&
%\AxiomC{$\h | X, Y\Ra$}
%\RightLabel{(wem)}
%\UnaryInfC{$\h | X\Ra | Y\Ra $}
%\DisplayProof
%\end{tabular}
%\end{center}
\end{example} 
%
%The conclusion and every premises of the rule schema has the form $s_{1}|\ldots|s_{n}$ where each~$s_{i}$ is either (i)~a hypersequent-variable (denoted by~$\h$), (ii)~a sequent-variable (written here as~$X\Ra\Pi$), 
%(iii)~$\vec{\Gamma}\Ra\Pi$, or (iv)~~$\vec{\Gamma}\Ra$. In the last two cases, $\vec{\Gamma}$~is a finite list comprising of multiset-variables (denoted using $X,Y,Z$), formulas built from formula-variables ($A,B$) and propositional-variables ($p,q,r$), and $\Pi$~is a  succedent-multiset-variable.
%Each~$s_{i}$ is not a sequent, but it is still referred to as a \textit{component}.
%This overloading of the terminology with~(\ref{hypersequent}) is standard practice.
%The variables in the rule schema are collectively referred to as \textit{schematic-variables}.
%A rule schema that contains neither formula-variables nor propositional-variables is called a \textit{structural rule schema}.
Every schematic-variable is intended for instantiation by a certain type of object.
\begin{center}
\begin{tabular}{l | l}
schematic-variable (notation)				& type of instantiation \\ \hline
hypersequent-variable ($H$)				& any hypersequent (also empty)\\
multiset-variable ($X,Y,Z$)				& any multiset of formulas			\\
succedent-variable ($\Pi$)					& any formula or empty				\\
formula-variable ($A,B$)					& any formula
\end{tabular}
\end{center}
Let~$r$ be a rule schema.
An \textit{instantiation}~$I$ of~$r$ is a map from each schematic-variable occurring in~$r$---let~$\var(r)$ denote the set of such schematic-variables---to an object of the corresponding type.
A \textit{rule instance}~$I(r)$ is obtained by instantiating each~$\alpha\in\var(r)$ with~$I(\alpha)$; the instantiation of a schematic-formula is determined by the instantiation of its constituent formula-variables. E.g.~$I(A\heartsuit B):=I(A)\heartsuit I(B)$.

This definition of rule schema and rule instance is the standard one from structural proof theory. Usually this distinction is not made explicit, since it can be discerned from its context. We have given this formal development because it will be helpful for formulating precise arguments later on. 
\begin{example}
Consider the rule schema~(${\land}$R) from Eg.~\ref{eg-landR}. Then $\var(\text{${\land}$R})$ consists of the hypersequent-variable~$H$, the multiset-variable~$X$, and the formula-variables~$A$ and~$B$. Consider the following instantiations of~(${\land}$R).
\begin{align*}
&I_{1}(H)=\emptyset		&& I_2(H)=\emptyset		&& I_3(H)= \quad \Ra s | q\imp p\Ra p	\\
&I_{1}(X)=\emptyset		&& I_2(X)=\{r,r\}		&& I_3(X)= \{r\}					\\
&I_{1}(A)=p				&& I_2(A)=p\land q		&& I_3(A)= p\land q				\\
&I_{1}(B)=q				&& I_2(B)=q			&& I_3(B)= q
\end{align*}
Rule instances~$I_1(\text{${\land}$R})$, $I_2(\text{${\land}$R})$, and $I_3(\text{${\land}$R})$ appear below.
%A rule instance is obtained by uniformly instantiating~$\h$ by any hypersequent, $X$~by a multiset of formulas, and the formula-variables $A$ and~$B$ by formulas. Of course, the instantiation of the compound-formula-variable~$A\land B$ is fully determined by the instantiations of~$A$ and~$B$. Here are some examples of rule instances. Following standard convention, the name of the rule schema is used also to label its rule instances (although the label is sometimes omitted).
\begin{center}
\begin{small}
\begin{tabular}{c@{\hspace{0.3cm}}c}
\AxiomC{$\Ra p$}
\AxiomC{$\Ra q$}
\RightLabel{$I_1(\text{${\land}$R})$}
\BinaryInfC{$\Ra p\land q$}
\DisplayProof
&
\AxiomC{$r,r\Ra p\land q$}
\AxiomC{$r,r\Ra q$}
\RightLabel{$I_2(\text{${\land}$R})$}
\BinaryInfC{$r,r \Ra (p\land q)\land q$}
\DisplayProof
\\[1.5em]
\multicolumn{2}{c}{
\AxiomC{$\Ra s | q\imp p\Ra p| r\Ra p\land q$}
\AxiomC{$\Ra s | q\imp p\Ra p| r\Ra q$}
\RightLabel{$I_3(\text{${\land}$R})$}
\BinaryInfC{$\Ra s | q\imp p\Ra p| r \Ra (p\land q)\land q$}
\DisplayProof
}
\end{tabular}
\end{small}
\end{center}
\end{example}

A \textit{derivation} of the hypersequent~$h$ in the hypersequent calculus~$\mathcal{H}$ is defined in the usual way as a finite tree of hypersequents such that its root is~$h$, its leaves are instances of initial rule schemas, and each interior node and its children are the conclusion and premises of an instance of some rule schema in~$\mathcal{H}$.

A derivation of the formula $B$ is a derivation of the hypersequent $\Ra B$.

The \textit{height} of a derivation is the number of nodes on its longest branch.

The \textbf{hypersequent calculus}~$\HFLe$ is given in Fig.~\ref{figure-HFLec}.
An example of derivation in~$\HFLe$ is given in Fig.~\ref{fig-sample-der}.
The cut-rule below is \textit{not} a rule schema in~$\HFLe$ but it is well-known to be admissible (i.e. if the premises of an instance of the cut-rule are derivable in~$\HFLe$ then so is the conclusion).
\begin{center}
\AxiomC{$H | X,A\Ra\Pi$}
\AxiomC{$H | Y\Ra A$}
\RightLabel{(cut)}
\BinaryInfC{$H | X,Y\Ra \Pi$}
\DisplayProof
\end{center}

%\begin{itemize}
%\item hypersequent-variables (convention: denoted by~$\h$)
%\item (antecedent) multiset-variables ($X$,$Y$,$Z$)
%\item succedent multiset-variable ($\Pi$)
%\item formula-variables ($A$,$B$)
%\item propositional-variables ($p,q,r$)
%\end{itemize}

The \textit{extension of the hypersequent calculus}~$\mathcal{H}$ by the finite set~$R$ of rule schemas is the hypersequent calculus~$\mathcal{H}\cup R$ (following standard convention, we write~$\mathcal{H}+R$). %We identify a logic with the set of formulas that are theorems.
%The \textit{axiomatic extension}~$L+\mathcal{F}$ of the logic~$L$ by a finite set~$\mathcal{F}$ of formulas is defined in the usual way as % the closure of~$L\cup \mathcal{F}$ under the axioms and rules of its Hilbert calculus.
%the smallest logic containing $L\cup\mathcal{F}$.

Here are the rule schemas of contraction~(c) and left weakening~(lw). In the commutative setting of this paper each is interchangeable with the structural rule schema obtained by replacing the formula-variable~$A$ with a multiset-variable~$Y$.
%The contraction rule schema~(c) and the left weakening rule schema~(lw) are
\begin{center}
\begin{tabular}{c@{\hspace{1cm}}c}
\AxiomC{$\h |  X,A,A\DC \Pi$}
\RightLabel{(c)}
\UnaryInfC{$\h |  X,A\DC \Pi$}
\DisplayProof
&
\AxiomC{$\h |  X\DC \Pi$}
\RightLabel{(lw)}
\UnaryInfC{$\h |  X,A\DC \Pi$}
\DisplayProof
\end{tabular}
\end{center}
The rule schemas (lw) and right weakening (rw) below left can be combined as the weakening rule below right (there are other variants that are also equivalent).
This paper studies extensions of~$\HFLe+(lw)$. The extensions $\HFLe+(lw)+(rw)$/$\HFLe+(w)$ can then be viewed as a particular case of the general result.
\begin{center}
\begin{tabular}{c@{\hspace{0.5cm}}c}
\AxiomC{$\h |  X\DC $}
\RightLabel{(rw)}
\UnaryInfC{$\h |  X\DC A$}
\DisplayProof
&
\AxiomC{$\h |  X\DC\Pi$}
\RightLabel{(w)}% $|\Pi\cup\Pi'|\leq 1$}
\UnaryInfC{$\h |  X,Y\DC\Pi,\Pi'$}
\DisplayProof
\end{tabular}
\end{center}
We denote $\HFLe+(c)$ by~$\HFLec$, $\HFLe+(lw)$ by~$\HFLelw$, and $\HFLe+(lw)+(rw)$ by~$\HFLew$.

We say that~$\mathcal{H}$ is a \textit{hypersequent calculus for the logic}~$L$ if% for every formula~$B$: 
\[
\text{For every formula~$B$: $B\in L$ iff $\mathcal{H}$ derives $\Ra B$}
\]
% The set~$L$ of derivable formulas in the hypersequent calculus~$\mathcal{H}$ has the following properties: (i)~if $A\in L$ and $A'$ arises from $A$ by uniformly substituting propositional variables in $A$ by formulas, then $A'\in L$, and (ii)~if the cut-rule is admissible and $A\in L$ and $A\imp B\in L$ then $B\in L$. 
 
The \emph{decision problem} for a hypersequent calculus~$\mathcal{H}$ asks
\begin{center}
Is a given hypersequent~$h$ derivable in~$\mathcal{H}$?
\end{center}

\begin{figure*}
\hspace{-1cm}
\begin{scriptsize}
\begin{tabular}{ccc}
\multicolumn{3}{c}{
\begin{tabular}{c@{\hspace{3em}}c@{\hspace{3em}}c@{\hspace{3em}}c@{\hspace{3em}}c@{\hspace{3em}}c}
\AxiomC{}
\UnaryInfC{$\h |  A\DC A$}
\DisplayProof
&
\AxiomC{}
\UnaryInfC{$\h |  \bot, X\Ra\Pi$}
\DisplayProof
&
\AxiomC{}
\UnaryInfC{$\h |  X\Ra \top$}
\DisplayProof
&
\AxiomC{}
\UnaryInfC{$\h |  0\Ra$}
\DisplayProof
&
\AxiomC{}
\UnaryInfC{$\h |  \Ra 1$}
\DisplayProof
&
\AxiomC{$\h |  X\Ra \Pi$}
\UnaryInfC{$\h |  1,X\Ra \Pi$}
\DisplayProof
\end{tabular}
}
\\[1.5em]
\AxiomC{$\h |  X\Ra$}
\UnaryInfC{$\h |  X\Ra 0$}
\DisplayProof
&
\AxiomC{$\h |  X\Ra\Pi | X\Ra\Pi$}
\RightLabel{(EC)}
\UnaryInfC{$\h |  X\Ra\Pi$}
\DisplayProof
&
\AxiomC{$\h$}
\RightLabel{(EW)}
\UnaryInfC{$\h |  X\Ra\Pi$}
\DisplayProof
\\[1.5em]
\AxiomC{$\h |  X,A,B\DC \Pi$}
\RightLabel{($\fus$L)}
\UnaryInfC{$\h |  X,A\fus B\DC \Pi$}
\DisplayProof
&
\AxiomC{$\h |  X\DC A$}
\AxiomC{$\h |  Y\DC B$}
\RightLabel{($\fus$R)}
\BinaryInfC{$\h |  X,Y\DC A\fus B$}
\DisplayProof
&
\AxiomC{$\h |  X,A\DC \Pi$}
\AxiomC{$\h |  X,B\DC \Pi$}
\RightLabel{($\lor$L)}
\BinaryInfC{$\h |  X,A\lor B\DC \Pi$}
\DisplayProof
\\[1.5em]
\AxiomC{$\h |  X\DC A_{i}$}
\RightLabel{($\lor$R)}
\UnaryInfC{$\h |  X\DC A_{1}\lor A_{2}$}
\DisplayProof
&
\AxiomC{$\h |  X,A_{i}\DC \Pi$}
\RightLabel{(${\land}$L)}
\UnaryInfC{$\h |  X,A_{1}\land A_{2}\DC \Pi$}
\DisplayProof
&
\AxiomC{$\h |  X\DC A$}
\AxiomC{$\h |  X\DC B$}
\RightLabel{(${\land}$R)}
\BinaryInfC{$\h |  X\DC A\land B$}
\DisplayProof
\\[1.5em]
\multicolumn{3}{c}{
\begin{tabular}{cc}
\AxiomC{$\h |  X\DC A$}
\AxiomC{$\h |  Y,B\DC \Pi$}
\RightLabel{(${\imp}$L)}
\BinaryInfC{$\h |  X,Y,A\imp B\DC \Pi$}
\DisplayProof
&
\AxiomC{$\h |  X,A\DC B$}
\RightLabel{(${\imp}$R)}
\UnaryInfC{$\h |  X\DC A\imp B$}
\DisplayProof
\end{tabular}
}
%\AxiomC{$\h |  X,A,A\DC \Pi$}
%\RightLabel{(c)}
%\UnaryInfC{$\h |  X,A\DC \Pi$}
%\DisplayProof
\end{tabular}
\end{scriptsize}
\caption{The hypersequent calculus~$\HFLe$ for~$\FLe$}
\label{figure-HFLec}
\end{figure*}

\begin{figure*}
\begin{center}
\AxiomC{}
\UnaryInfC{$\Ra p | p\Ra p$}
\AxiomC{}
\UnaryInfC{$\Ra p | q\Ra q$}
\RightLabel{($\fus$R)}
\BinaryInfC{$\Ra p | p,q\Ra p\fus q$}
\RightLabel{$\lor$R}
\UnaryInfC{$\Ra p | p,q\Ra (p\fus q)\lor(p\fus r)$}
\AxiomC{}
\UnaryInfC{$\Ra p | p\Ra p$}
\AxiomC{}
\UnaryInfC{$\Ra p | r\Ra r$}
\RightLabel{($\fus$R)}
\BinaryInfC{$\Ra p | p,r\Ra p\fus r$}
\RightLabel{($\lor$R)}
\UnaryInfC{$\Ra p | p,r\Ra (p\fus q)\lor(p\fus r)$}
\RightLabel{$\lor$L}
\BinaryInfC{$\Ra p | p,q\lor r\Ra (p\fus q)\lor(p\fus r)$}
\RightLabel{($\fus$L)}
\UnaryInfC{$\Ra p | p\fus (q\lor r)\Ra (p\fus q)\lor(p\fus r)$}
\DisplayProof
\end{center}
\caption{An example of a derivation in $\HFLe$. Derivation of $\Ra p | p\fus (q\lor r)\Ra(p\fus q)\lor(p\fus r)$.}
\label{fig-sample-der}
\end{figure*}

\subsection{Hypersequent calculi for substructural logics} Ciabattoni \textit{et al.}~\cite{CiaGalTer08,CiaGalTer17}
present hypersequent calculi with the subformula property for many axiomatic extensions of~$\FLe$, by extending~$\HFLe$
with \textit{analytic structural rule schemas}.
\begin{definition}\label{def-linear-subvariable}
A rule schema has a \emph{linear conclusion} if each schematic-variable in the conclusion occurs exactly once there;
it has the \emph{subvariable property} if every schematic-variable in the premise occurs in the conclusion.
\end{definition}
Each analytic structural rule schema has a linear conclusion and the subvariable property. This is all that matters for this paper. Nevertheless, here is the formal definition.
%\begin{definition}\label{def-analytic}%[analytic structural rule schema]
%An \emph{analytic} structural rule schema has linear conclusion and the subvariable property.
%\end{definition}
%The above suffices for this paper. Nevertheless, here is the explicit form of the rule schema for the interested reader.
\begin{definition}
An \emph{analytic structural rule schema} has the form below and it is built from the hypersequent-variable~$H$, pairwise distinct succedent-variables~$\{\Pi_{i}|i\in I\}$, and pairwise distinct multiset-variables $\{Y_{i}| i\in I\}\cup \{X_{is} | i\in I, 1\leq s\leq s_i\}\cup \{Z_{jt} | j\in J, 1\leq t\leq t_j\}$. Each~$\mathcal{L}_{ik}$ ($i\in I, k\in K_i$) and~$\mathcal{M}_{l}$ ($l\in L$) is a list from the multiset-variables.
\begin{equation}\label{analytic-rule}
\text{
\AxiomC{$\{ H \VL Y_{i}, \mathcal{L}_{ik}\Ra \Pi_{i} \}_{i\in I, k\in K_{i}}$}
\AxiomC{$\{ H \VL \mathcal{M}_{l}\Ra \}_{l\in L}$}
\BinaryInfC{$H \VL Y_{i},X_{i1},\ldots,X_{i s_i}\Ra\Pi_{i} (i\in I) \VL Z_{j1},\ldots,Z_{j t_j}\Ra (j\in J)$}
\DisplayProof
}
\end{equation}
\end{definition}
Evidently every rule schema in \textit{every analytic structural rule extension of~$\HFLec$ and~$\HFLelw$} has a linear conclusion and the subvariable property. The \textit{subformula property} follows for each calculus: every formula occurring in a derivation of a hypersequent~$h$ is a subformula of some formula in~$h$.

%Some examples of analytic structural rules appear in Fig.~\ref{fig-str-rules}.
The \textit{substructural hierarchy}~\cite{CiaGalTer08} identifies the logics that have an analytic structural rule extension of~$\HFLe$. First define $\P_{0}=\N_{0}=\mathsf{Var}$ (propositional variables). Now define
\begin{align*}
&\P_{n+1} := 1 \VL \bot \VL \N_{n} \VL \P_{n+1}\lor\P_{n+1}\VL \P_{n+1}\fus \P_{n+1} 		\\
&\N_{n+1} := 0 \VL \top \VL \P_{n} \VL \N_{n+1}\land\N_{n+1}\VL \P_{n+1}\imp \N_{n+1}	\\
&\text{Also define }\P_{3}' := 1\VL \bot \VL \N_{2}\land 1 \VL \P_{3}'\lor \P_{3}'  \VL \P_{3}'\fus \P_{3}'
\end{align*}
It is easily seen that~$U_{i}\subset V_{i+1}$ ($U,V\in\{\P,\N\}$)  and $\P_{3}'\subset\P_{3}$.

Here is the summary of the result from~\cite{CiaGalTer08,CiaGalTer17} that we use:
In the presence of (lw), every formula in~$\P_3$ is effectively transformable to an equivalent analytic structural rule schema. In its absence, this holds for the \textit{acyclic} formulas in~$\P_{3}'$ i.e. formulas on which the above transformation terminates (see~\cite[Def. 4.11]{CiaGalTer17} for details). More formally:
%This motivates (see~\cite{Ram20LICS}):
\begin{definition}\label{def-amenable}
A finite set~$\mathcal{F}$ of formulas is \emph{amenable} if\\(i)~$\mathcal{F}\subseteq\P_{3}$ and left weakening $p\fus q\rightarrow p\in\mathcal{F}$, or\\(ii)~$\mathcal{F}\subseteq\P_{3}'$ consists of acyclic formulas.
\end{definition}
Call the formula $A$ amenable if $\{A\}$ is an amenable set.

\begin{theorem}[\cite{CiaGalTer08,CiaGalTer17}]\label{thm-CiaGalTer08}
\hspace{1em}
\begin{enumerate}[(i)]
\item\label{forward} Let~$\mathcal{F}$ be an amenable set. A finite set~$R_{\mathcal{F}}$ of analytic structural rule schemas is computable from~$\mathcal{F}$ such that $\HFLe+R_{\mathcal{F}}$ is a calculus for~$\FLe+\mathcal{F}$.
 
\item Every analytic structural rule extension~$\HFLe+R$ has cut-admissibility and is a calculus for some axiomatic extension of~$\FLe$ by amenable formulas.
\end{enumerate}
\end{theorem}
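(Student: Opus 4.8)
The plan is to prove the two directions separately, in each case combining an \emph{invertibility/decomposition} step that links axioms with rules and a \emph{completion} step that secures analyticity and cut-admissibility. The whole argument should be made effective, so that part~(\ref{forward}) yields the computable set $R_{\mathcal{F}}$.

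For part~(\ref{forward}), I would first establish the \emph{height-preserving invertibility} of the logical rules (${\imp}$R), (${\land}$R), ($\lor$L), ($\fus$L), together with the unit rules, in $\HFLe$. This lets me decompose the sequent $\Ra A$ of an $\N_2$-formula $A$ into an equivalent finite set of sequents built only from its $\P_2$-subformulas: reading the decomposition bottom-up produces a sequent rule whose premises record these sequents and whose conclusion is the resulting atomic residue. At the $\P_3$ level the outermost $\fus$ and $\lor$ are absorbed into the hypersequent structure via (EC) and (EW) — a disjunction becomes a branching into several components and a fusion becomes a merge — so a $\P_3$-formula yields a structural \emph{hyper}sequent rule $r$. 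This $r$ is equivalent to $A$ over $\FLe$ (the rule derives $\Ra A$, and conversely $\FLe+A$ derives every instance of $r$), but it need not yet have a linear conclusion or the subvariable property.

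The second step for part~(\ref{forward}) is the \emph{completion} (Ackermann-lemma) procedure, which rewrites $r$ into an equivalent analytic rule by repeatedly eliminating variables that violate analyticity, each elimination replacing premises by their resolvents. The crucial point is termination: in case~(i) the presence of left weakening lets one discard the premises that would otherwise loop, so completion always terminates and preserves equivalence; in case~(ii) termination is exactly what acyclicity of the formula guarantees. I expect this completion step — and in particular proving that it terminates while still yielding a rule equivalent to the original axiom — to be the main obstacle, since it is where the two side-conditions (weakening, acyclicity) are genuinely used.

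For part~(ii) I would argue in two stages. First, cut-admissibility: for an arbitrary analytic structural rule extension $\HFLe+R$ I would run the standard cut-elimination argument by induction on cut-formula complexity with a subsidiary induction on the heights of the derivations of the premises, permuting a cut upward through the rules. The linear-conclusion and subvariable properties of the analytic rules are precisely the conditions that make the principal cases reduce and the structural cases permute without introducing new formulas, so the reduction terminates. Second, axiom extraction: reading the transformation of part~(\ref{forward}) backwards, I would assign to each analytic rule an equivalent formula — each premise contributing a conjunct or implication, and the several conclusion-components contributing a disjunction or fusion — and then verify by the grammar of $\P_3$ (resp.\ $\P_3'$) that the formula lands in the amenable class, using that a genuine finite analytic rule produces an acyclic formula. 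Soundness and completeness of $\HFLe+R$ for $\FLe$ extended by this formula then follow from the equivalence already established together with the cut-admissibility just proved.
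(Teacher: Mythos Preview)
The paper does not prove this theorem at all: it is stated as a summary of results from \cite{CiaGalTer08,CiaGalTer17} and invoked as a black box, with no proof given in the present paper. There is therefore nothing here to compare your proposal against line by line.

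That said, your outline is broadly faithful to the strategy actually used in \cite{CiaGalTer08,CiaGalTer17}: the conversion of $\P_3$/$\P_3'$ axioms into structural hypersequent rules via invertibility of the logical rules and absorption of the outer $\lor/\fus$ structure into the hypersequent level, followed by a completion procedure that massages the resulting rule into analytic form, with termination supplied either by left weakening or by acyclicity; and, in the other direction, a syntactic cut-elimination for analytic structural rule extensions together with a reverse translation from rules back to amenable axioms. One caution: in your sketch of part~(ii) you suggest that any finite analytic rule automatically yields an acyclic $\P_3'$ formula. In the cited works the precise claim is subtler --- in the absence of weakening, the reverse translation lands in $\P_3'$ and acyclicity has to be argued, while in the presence of weakening one lands in $\P_3$ without needing acyclicity --- so if you were to write this up in full you would need to handle those two cases separately rather than uniformly.
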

%The rules (lw) and (c) are equivalent to the analytic structural rule schemas obtained by replacing the formula-variable in each with a multiset-variable. The latter are precisely the analytic rules that are computed from $p\imp 1$ and $p\fus p\imp p$.
%The analytic structural rule schemas (lw), (rw), and (c) are versions of the rules computed from the amenable formulas $p\imp 1$, $0\imp p$, and $p\fus p\imp p$ respectively. 
See Fig.~\ref{fig-str-rules} for examples of amenable formulas and the analytic structural rule schema computed from them.

The amenable extensions of~$\FLe$---and consequently the logics covered by the results in this paper---comprise a significant set of substructural logics in a formal sense: no further
axiomatic extensions can be obtained via analytic structural rule extension of~$\HFLe$~\cite{CiaGalTer17}.
Moreover, every axiomatic extension of~$\FLe$ is equivalent to an extension by $\N_3$ axioms~\cite{Jer16}
i.e. the hierarchy closes at~$\N_3$. An example of a proper $\N_3$ axiom is distributivity (it has no equivalent analytic structural rule).
%The above theorem essentially establishes the equivalence of amenable extensions and analytic extensions.
%The reach of the amenable formulas is indicated by the fact that the substructural hierarchy collapses at the next level: \emph{every} formula is equivalent in~$\FLe$ to some formula in~$\N_{3}$ (see~\cite{Jer16}).
%\begin{remark}
%Analytic structural rule extensions of~$\HFLe$ can characterised~\cite{CiaGalTer17} in terms of closure under hyper-MacNeille algebraic completions.
%\end{remark}

\begin{figure*}
\hspace{-1.5cm}
\begin{scriptsize}
\begin{tabular}{cc}
\AxiomC{$\h |  Y_{1}, X_{1}\Ra\Pi_{1}$}
\AxiomC{$\h |  Y_{2}, X_{2}\Ra\Pi_{2}$}
\RightLabel{(com)}
\BinaryInfC{$\h |  Y_{1},X_{2}\Ra\Pi_{1} | Y_{2},X_{1}\Ra\Pi_{2}$}
\noLine
\UnaryInfC{$\landone{p\imp q}\lor \landone{q\imp p}$}
%\noLine
%\UnaryInfC{$(2,3)$}
\DisplayProof
&
\AxiomC{$\h | Z_{1}, Z_{2}\Ra$}
\RightLabel{(wem)}
\UnaryInfC{$\h | Z_{1}\Ra | Z_{2}\Ra $}
\noLine
\UnaryInfC{$\landone{p\imp 0}\lor \landone{(p\imp 0)\imp 0}$}
%\noLine
%\UnaryInfC{$(1,3)$}
\DisplayProof
\\[3.5em]
\AxiomC{$\h | Y_{i}, Y_{j} \Ra \Pi_{i} (0\leq i,j\leq k; i\neq j)$}
\RightLabel{($Bwk$)}
\UnaryInfC{$\h | Y_{0}\Ra \Pi_{0} | \ldots | Y_{k}\Ra \Pi_{k}$}
\noLine
\UnaryInfC{$\lor_{i=0}^{k}\landone{p_{i}\imp (\lor_{j\neq i} \, p_{j})}$}
%\noLine
%\UnaryInfC{$(1,k+2)$}
\DisplayProof
&
\AxiomC{$\h | Y_{i}, Y_{j} \Ra \Pi_{i} (0\leq i\leq k-1; i+1\leq j\leq k)$}
\RightLabel{($Bck$)}
\UnaryInfC{$\h | Y_{0}\Ra \Pi_{0} | \ldots | Y_{k-1}\Ra \Pi_{k-1}| Y_{k}\Ra$}
\noLine
\UnaryInfC{$\landone{p_{0}}\lor \landone{p_{0}\imp p_{1}}\lor \ldots \lor \landone{(p_{0}\land\ldots \land p_{k-1})\imp p_{k}}$}
%\noLine
%\UnaryInfC{$(1,k+2)$}
\DisplayProof
\\[3.5em]
\AxiomC{$\h | Y,X_{1}\Ra\Pi$}
\AxiomC{$\h | Y,X_{2}\Ra\Pi$}
\RightLabel{(mingle)}
\BinaryInfC{$\h | Y,X_{1},X_{2}\Ra\Pi$}
\noLine
\UnaryInfC{$\landone{p\cdot p\imp p}$}
%\noLine
%\UnaryInfC{$(3,2)$}
\DisplayProof
&
\AxiomC{$\{\h | Y,X_{i_{1}},\ldots,X_{i_{m}}\Ra\Pi \textit{ s.t. } \{i_{1},\ldots,i_{m}\}\subseteq \{1,\ldots,n\} \}$}
\RightLabel{($\text{knot}^{n}_{m}$)}
\UnaryInfC{$\h | Y,X_{1},\ldots,X_{n}\Ra\Pi$}
\noLine
\UnaryInfC{$\landone{p^{n}\imp p^{m}}\text{ ($n,m\geq 0$)}$}
%\noLine
%\UnaryInfC{$(n+1,2)$}
\DisplayProof
\end{tabular}

\caption{Analytic structural rule schema computed from the amenable formula below it. 
%Below that is its $(\text{formula multiplicity},\text{component multiplicity})$. 
$\landone{A}$ denotes~$(A)\land 1$. The $\land 1$ can be dropped for~$\FLelw$ extensions.}
\label{fig-str-rules}
\end{scriptsize}
\end{figure*}

\section{Hypersequent substructural logics with weakening: decidability}\label{sec-FLew-decidability}

Throughout this section $\Omega$ is a finite set of formulas. 

A hypersequent such that every formula in it belongs to~$\Omega$ is called an \textit{$\Omega$-hypersequent}.

Let us call (lw), (EC) and (EW) the \emph{weak structural rules}.

Define the relation on $\Omega$-hypersequents $g \qo h$ iff~$h$ is derivable from~$g$ using weak structural rules.
Observe that in the witnessing derivation, (lw) will only introduce formulas from~$\Omega$, and every sequent introduced by (EW) must consist of formulas from~$\Omega$.
%A (possibly empty) sequence of weak structural rules is denoted by a doubly dashed line. 
E.g. if $\Gamma\cup\{A\}\cup\Pi\subseteq \Omega$ then $(\Gamma,A\Ra\Pi\mid \Gamma\Ra\Pi) \qo (\Gamma,A\Ra\Pi)$ since
\begin{center}
%\begin{small}
%\begin{tabular}{ccc}
\AxiomC{$\Gamma,A\Ra\Pi\mid \Gamma\Ra\Pi$}
\RightLabel{(lw)}
\UnaryInfC{$\Gamma,A\Ra\Pi\mid \Gamma,A\Ra\Pi$}
\RightLabel{(EC)}
\UnaryInfC{$\Gamma,A\Ra\Pi$}
\DisplayProof
%&
%so $(\Gamma,A\Ra\Pi\mid \Gamma\Ra\Pi) \qo (\Gamma,A\Ra\Pi)$
%&
%denoted
%&
%\AxiomC{$\Gamma,A\Ra\Pi\mid \Gamma\Ra\Pi$}
%\doubleLine\dashedLine
%\UnaryInfC{$\Gamma,A\Ra\Pi$}
%\DisplayProof
%\end{tabular}
%\end{small}
\end{center}
Evidently~$\qo$  is a quasi-ordering (reflexive and transitive binary relation) but it is not a partial order because it is not antisymmetric. Indeed, if $p\in\Omega$ then
\[
(p \Ra  p)	\qo	(p \Ra  p | p \Ra  p)	 \qo	(p \Ra  p)
\]
\begin{lemma}
\label{lem:computabilityqo}
The relation $\qo$ is decidable.
\end{lemma}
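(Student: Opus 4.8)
The plan is to reduce the relation $g \qo h$ to a finite combinatorial condition on the components of the two $\Omega$-hypersequents, which can then be checked directly. The apparent difficulty is that, read downwards, both (lw) and (EW) enlarge a hypersequent, so the set of hypersequents derivable from $g$ by weak structural rules is infinite and a naive search for a witnessing derivation need not terminate. To sidestep this I first record the effect of each weak structural rule on the multiset of components: (lw) replaces one component $X \Ra \Pi$ by a larger component $X,A \Ra \Pi$ with $A \in \Omega$ (same succedent, antecedent enlarged); (EW) inserts a fresh $\Omega$-sequent as a new component; and (EC) merges two identical components into one. Say that a sequent $t$ is a \emph{weakening} of a sequent $s$, written $s \sqsubseteq t$, if $s$ and $t$ have the same succedent and the antecedent of $s$ is a sub-multiset of the antecedent of $t$; equivalently, $t$ is obtainable from $s$ by (lw) steps alone.

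I then claim the characterisation: $g \qo h$ holds if and only if there is a function $f$ from the component-occurrences of $g$ to those of $h$ with $c \sqsubseteq f(c)$ for every component $c$ of $g$. This is manifestly decidable, since $g$ and $h$ have only finitely many components, hence only finitely many candidate functions $f$, and each side-condition $c \sqsubseteq f(c)$ is decided by comparing the succedents and testing multiset inclusion of the antecedents. Establishing the two directions of this equivalence is the real content of the proof.

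For the ``if'' direction I would build an explicit derivation. First grow each component $c$ of $g$ into $f(c)$ by (lw); the result is a multiset all of whose components are components of $h$. I then correct multiplicities: for any component of $h$ that is missing, or is present in too few copies, I add it by (EW)—which is legitimate precisely because every component of $h$ is an $\Omega$-sequent—and for any component present in surplus I merge copies by (EC) down to the required multiplicity, which is at least $1$ for any genuine component of $h$. Since (EW) raises a multiplicity by one and (EC) lowers it by one (while it stays $\geq 1$), every target multiplicity is reachable, and the process terminates in exactly $h$. (The empty hypersequent is handled uniformly: the empty function always exists, and indeed any $\Omega$-hypersequent is buildable from nothing by (EW) and (lw).)

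The converse is the step needing care, and is where I expect the main obstacle, because (EC) destroys components, so it is not immediate that every component of $g$ survives into $h$. I would prove it by induction on the length of the witnessing derivation, maintaining the invariant that for every intermediate hypersequent $h'$ between $g$ and $h$ there is a map $\phi$ from the component-occurrences of $g$ to those of $h'$ with $c \sqsubseteq \phi(c)$. Tracking component-occurrences as tokens, each rule preserves the invariant: (lw) only enlarges a single token's sequent, so $\sqsubseteq$ is preserved by transitivity; (EW) introduces a new token outside the image of $\phi$, leaving $\phi$ valid unchanged; and (EC) merges two tokens carrying \emph{equal} sequents, whereupon I redirect $\phi$ to the surviving token, which is still correct because the two labels agreed. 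Reading off the invariant at $h' = h$ yields the required function $f$. Together the two directions give the characterisation, and hence the decidability of $\qo$.
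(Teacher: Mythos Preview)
Your proposal is correct and follows essentially the same approach as the paper: both characterise $g \qo h$ by the combinatorial condition that every component of $g$ has a $\sqsubseteq$-larger companion in $h$, establish one direction by induction on the witnessing derivation of weak structural rules, and the other by explicitly applying (lw) followed by (EW) and (EC). Your version is more explicit in tracking component-occurrences as tokens through the induction, but the underlying argument is identical to the paper's (more tersely stated) proof.
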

\begin{proof}
It suffices to observe that for hypersequents $g,h$: $g\qo h$ iff for every component $\Gamma\Ra \Pi$ in $g$, there is a component $\Gamma'\Ra\Pi$ in $h$ such that $\Gamma$ is a submultiset of $\Gamma'$ (i.e. for each formula, its multiplicity in~$\Gamma'$ $\geq$ its multiplicity in~$\Gamma$).
The left-to-right direction is by induction on the number of weak structural rules witnessing $g\qo h$. Right-to-left: transform every $\Gamma\Ra\Pi$ in~$g$ into the corresponding $\Gamma'\Ra\Pi$ in~$h$ by (lw). Now obtain~$h$ by using (EW) to insert missing components and (EC) to remove unwanted multiplicities.
\end{proof}

\subsection{Defining $(S_{i})$ and establishing its completeness}

Throughout this subsection we take $S$ to be a finite set of $\Omega$-hypersequents. %, $\Omega$ a finite set of formulas, 
Also let~$\mathcal{\C}$ denote the extension $\HFLelw+R$ by any finite set~$R$ of analytic structural rule schemas. 

Define $\maxelesize{S}=\max\{\hypsize{h}|h\in S\}$ i.e. the maximum of the number of symbols in a hypersequent in~$S$. Also define~$\maxelesize{\C}=\max\{\hypsize{r}|r\in\C\}$ i.e. the maximum of the number of symbols in a rule schema in~$\C$.

%Call a hypersequent $S$-\textit{\slim} if no component has an antecedent of cardinality $> \maxelesize{S}.\maxelesize{\C}^2.|\Omega|$, and no sequent occurs as a component more often than $\maxelesize{\C}$.

A hypersequent is $S$-\textit{\thin} if 
\begin{itemize}
\item no component in it has an antecedent of cardinality $> \maxelesize{S}.\maxelesize{\C}.|\Omega|$, and 
\item no sequent has multiplicity $>\maxelesize{\C}$.
\end{itemize}
In the first condition the `$\mply$' stands for multiplication.

\begin{definition}\label{def-WI}
%[$\WI(S,\Omega,\C)$ and $\WI^{\infty}(S,\Omega,\C)$]
$\WI(S,\Omega,\C)$ $=$ $\{ h|$ $h$ is an $\Omega$-hypersequent, and $h$ is $S$-{\thin}, and $h$~is the conclusion of a rule instance (of some rule schema from $\C$) with premises $h_1,\ldots ,h_n$ such that there exists $\{\minus{h_1},\ldots ,\minus{h_n}\}\subseteq S$ and $\minus{h_i} \qo h_i$ for each~$i$ $\}$.

Define $\WI^{\infty}(S,\Omega,\C)$ by deleting the condition ``$h$ is $S$-{\thin}" from the above definition.
\end{definition}
Def.~\ref{def-WI} can be depicted as follows (every hypersequent below is taken to be an $\Omega$-hypersequent):
%We can depict Def.~\ref{def-WI} as follows for an $\Omega$-hypersequent~$h$:
\begin{center}
%\begin{tabular}{c@{\hspace{1cm}}c}
\AxiomC{$\minus{h_{1}}\in S$}
\noLine
\UnaryInfC{$\vdots$}
\noLine
\UnaryInfC{$h_{1}$}
\AxiomC{$\cdots$}
\noLine
\UnaryInfC{weak structural rules}
\noLine
\UnaryInfC{restricted to~$\Omega$ i.e. $\qo$}
\noLine
\UnaryInfC{$\cdots$}
\AxiomC{$\minus{h_{n}}\in S$}
\noLine
\UnaryInfC{$\vdots$}
\noLine
\UnaryInfC{$h_{n}$}
\TrinaryInfC{$h$}
\noLine
\UnaryInfC{$h\in \WI^{\infty}(S,\Omega,\C)$; $h\in \WI(S,\Omega,\C)$ iff~$h$ is $S$-{\thin}}
\DisplayProof
\end{center}
$\WI^{\infty}(S,\Omega,\C)$ is infinite for non-empty~$S$ and~$\Omega$ because e.g. (EW) could be used to instantiate the hypersequent-variable in the premises with more and more components (hence yielding larger and larger conclusions). In contrast, we shall see that the $S$-{\thin} condition ensures computability of $\WI(S,\Omega,\C)$ from~$S$.

Let~$h(s)$ denote the multiplicity of sequent~$s$ in the hypersequent~$h$.
Its \textit{$2$-reduct} $\reduct{h}$ is the hypersequent such that the multiplicity of~$s$ is $\minimum(h(s),2)$.
A derivation is \textit{$2$-reduced} if every hypersequent in it~$g$ is $2$-reduced i.e. $\reduct{g}=g$.
%Observe that $h\qo \reduct{h}\qo h$.
\begin{lemma}\label{lem-tworeduced}
If $g\qo h$ then there is a $2$-reduced derivation witnessing $\reduct{g}\qo \reduct{h}$.
\end{lemma}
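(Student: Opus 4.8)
The statement is plausible on its face: by Lemma~\ref{lem:computabilityqo}, $\qo$ depends only on the \emph{set of distinct components} of a hypersequent (which component antecedents are dominated by which), and passing to the $2$-reduct preserves this set exactly---$\reduct{f}(s)\geq 1$ iff $f(s)\geq 1$. So $\reduct{g}\qo\reduct{h}$ holds whenever $g\qo h$. The content of the lemma is the stronger claim that the witness can be taken \emph{$2$-reduced}, and for this I would argue constructively by induction on the length $n$ of a fixed derivation $g=g_0\to g_1\to\cdots\to g_n=h$ in which each step is a single weak structural rule. First note that $\reduct{f}$ is always $2$-reduced, since $\minimum(\minimum(m,2),2)=\minimum(m,2)$; in particular $\reduct{g}$ and $\reduct{h}$ are $2$-reduced. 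The base case $n=0$ is trivial. For the inductive step the subderivation $g_0\to\cdots\to g_{n-1}$ witnesses $g\qo g_{n-1}$ in length $n-1$, so the induction hypothesis yields a $2$-reduced derivation of $\reduct{g_{n-1}}$ from $\reduct{g}$; it then suffices to append a $2$-reduced derivation of $\reduct{g_n}$ from $\reduct{g_{n-1}}$, i.e.\ to simulate the final rule application at the level of $2$-reducts.

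The cases of (EW) and (EC) are routine. An (EW) step adds one copy of a component $X\Ra\Pi$; writing $c$ for its multiplicity in $g_{n-1}$, on $2$-reducts the multiplicity moves from $\minimum(c,2)$ to $\minimum(c+1,2)$ and nothing else changes. If $c\leq 1$ this is a single (EW) step (from $0$ or $1$ to $1$ or $2$); if $c\geq 2$ the two $2$-reducts coincide and no step is needed. Dually an (EC) step lowers a multiplicity from $c+2$ to $c+1$, i.e.\ on $2$-reducts from $2$ to $\minimum(c+1,2)$, realized by one (EC) step when $c=0$ and by the empty derivation when $c\geq 1$. In either case no multiplicity exceeds $2$, so the intermediate hypersequents stay $2$-reduced.

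The (lw) case is the one genuine obstacle, because (lw) replaces one copy of a component $X\Ra\Pi$ by one copy of the strictly larger $X,A\Ra\Pi$, and the naive simulation can transiently create a third copy of $X,A\Ra\Pi$ and violate the $2$-reduced invariant. Let $a$ and $b$ be the multiplicities of $X\Ra\Pi$ and $X,A\Ra\Pi$ among the components of $g_{n-1}$ other than the distinguished copy acted on. Then $\reduct{g_{n-1}}$ carries $X\Ra\Pi$ at $\minimum(a+1,2)\geq 1$ and $X,A\Ra\Pi$ at $\minimum(b,2)$, while $\reduct{g_n}$ requires them at $\minimum(a,2)$ and $\minimum(b+1,2)$. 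The fix is to \emph{guard} the (lw): if $X,A\Ra\Pi$ already has multiplicity $2$ (the case $b\geq 2$), first apply (EC) to bring it to $1$ so that the subsequent (lw) returns it to $2$ rather than $3$; and if $X\Ra\Pi$ must remain at multiplicity $2$ afterwards (the case $a\geq 2$, where (lw) drops it to $1$), restore the missing copy with one (EW). Since $\minimum(a+1,2)\geq 1$ there is always a copy of $X\Ra\Pi$ available for the (lw) step.

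A short case analysis on $a\in\{0,1,\geq 2\}$ and $b\in\{0,1,\geq 2\}$ then confirms that this guarded sequence of at most three weak structural steps transforms $\reduct{g_{n-1}}$ into $\reduct{g_n}$ with no multiplicity ever exceeding $2$ (the worst case is $b\geq 2$, $a=0$: apply (EC) to $X,A\Ra\Pi$, then (lw) to the single $X\Ra\Pi$). Concatenating this local $2$-reduced derivation with the one supplied by the induction hypothesis gives a $2$-reduced derivation of $\reduct{h}$ from $\reduct{g}$, as required. I expect the bookkeeping in this final case analysis---ensuring both the endpoints match and the invariant is never broken in between---to be the only delicate part; everything else is bookkeeping on multiplicities capped at $2$.
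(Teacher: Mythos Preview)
Your proposal is correct and follows essentially the same approach as the paper: induction on the length of the witnessing derivation, with the last rule simulated at the level of $2$-reducts, and the (lw) case handled by guarding with an (EC) before (to avoid a third copy of $X,A\Ra\Pi$) and restoring a missing copy of $X\Ra\Pi$ afterwards. Your case split on the multiplicities $a,b$ is just a repackaging of the paper's split on $h'(X\Ra\Pi)$ and $h'(X,A\Ra\Pi)$; in fact your use of (EW) to restore $X\Ra\Pi$ when $a\geq 2$ is the correct step (the paper writes (EC) there, which appears to be a slip).
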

\begin{proof}
Let~$d$ be the derivation (consisting of weak structural rules) that witnesses $g\qo h$.
Induction on the height of~$d$. If the height is~$1$ then $g=h$ so $\reduct{g}=\reduct{h}$ and the claim is immediate.
Suppose that the height is~$k+1$. Consider the last rule in~$d$. 

Suppose it is (EC) taking $s|s|h'$ to~$s|h'$. If $h'(s)=0$ then by induction hypothesis IH $\reduct{g}\qo s|s|\reduct{h'}$ has a $2$-reduced derivation. Now apply (EC) to get $s|\reduct{h'}$ (this is $\reduct{s|h'}$ as required). If $h'(s)=1$ then by IH $\reduct{g}\qo s|\reduct{h'}$ (this is $\reduct{s|h'}$) has a $2$-reduced derivation. If $h'(s)\geq 2$ then by IH $\reduct{g}\qo \reduct{h'}$ (this is $\reduct{s|h'}$) has a $2$-reduced derivation.

Suppose it is (EW) taking $h'$ to~$s|h'$. If $h'(s)\leq 1$ then by induction hypothesis $\reduct{g}\qo \reduct{h'}$ has a $2$-reduced derivation. Now apply (EW) to get $s|\reduct{h'}$ (this is $\reduct{s|h'}$). If $h'(s)\geq 2$ then $\reduct{g}\qo \reduct{h'}$ (this is $\reduct{s|h'}$) has a $2$-reduced derivation.

Suppose that the last rule is~(lw) taking $h'|X\Ra\Pi$ to $h'|X,A\Ra\Pi$. By the induction hypothesis $\reduct{g}\qo \reduct{h'|X\Ra\Pi}$ has a $2$-reduced derivation.
If $h'(X,A\Ra\Pi)\geq 2$ then apply (EC) to $\reduct{h'|X\Ra\Pi}$ to make the $X,A\Ra\Pi$-multiplicity~$1$ and then apply (lw) taking $X\Ra\Pi$ to $X,A\Ra\Pi$. If $h'(X,A\Ra\Pi)\leq 1$ then apply (lw) to $\reduct{h'|X\Ra\Pi}$ taking $X\Ra\Pi$ to $X,A\Ra\Pi$.
In each of the two above cases: if $h'(X\Ra\Pi)\leq 1$ then the hypersequent obtained is already $\reduct{h}$; else if $h'(X\Ra\Pi)\geq 2$ then apply (EC) to take the $X\Ra\Pi$-multiplicity in the hypersequent obtained to~$2$, and so obtain a $2$-reduced derivation of~$\reduct{h}$.
\end{proof}

\begin{lemma}
\label{lem:constr}
The function $S\mapsto \WI(S,\Omega,\C)$ is computable.
\end{lemma}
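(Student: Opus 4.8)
The plan is to reduce computing $\WI(S,\Omega,\C)$ to a finite search, exploiting the $S$-{\thin} condition to bound the possible conclusions and the structure of the rule schemas in $\C$ to bound the search for a witnessing rule instance. First I would observe that there are only finitely many $S$-{\thin} $\Omega$-hypersequents, and that this finite set is itself computable from $S$, $\Omega$ and $\C$: an $S$-{\thin} hypersequent is built from components $\Gamma\Ra\Pi$ whose antecedent has cardinality at most $\maxelesize{S}\mply\maxelesize{\C}\mply|\Omega|$, whose antecedent formulas and optional succedent formula all come from the finite set $\Omega$, and each such component may occur with multiplicity at most $\maxelesize{\C}$. Since every member of $\WI(S,\Omega,\C)$ is by definition an $\Omega$-hypersequent that is $S$-{\thin}, it suffices to enumerate this finite candidate set and then to decide, for each candidate $h$, whether $h$ satisfies the remaining (rule-instance) condition of Definition~\ref{def-WI}.

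So fix such a candidate $h$. I must decide whether $h$ is the conclusion of a rule instance of some $r\in\C$ whose premises $h_1,\ldots,h_n$ each lie $\qo$-above a member of $S$. The crucial point is that only finitely many rule instances have conclusion $h$, and that they can be enumerated effectively. Indeed, for each of the finitely many schemas $r\in\C$ I would attempt to match the finite hypersequent $h$ against the conclusion of $r$: because $h$ has finitely many components, each a finite multiset of formulas, there are only finitely many ways to assign components of $h$ to the hypersequent-variable versus the explicitly displayed components, to recognise a displayed formula as an instance of a schematic-formula (e.g.\ as $A\land B$, $A\imp B$, $A\fus B$), and to split the finitely many antecedent formulas among the finitely many multiset-variables. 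Each such match determines an instantiation $I$ of the schematic-variables occurring in the conclusion of $r$; since these conclusions are linear, no such variable is over-determined, and in all cases there are only finitely many candidate $I$, all computable. (For the initial rule schemas $n=0$, so one simply checks directly whether $h$ is an instance, which is again a finite test.)

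Now the subvariable property does the decisive work: every schematic-variable occurring in a premise of $r$ already occurs in its conclusion, so the instantiation $I$ obtained above determines each premise completely. Hence each $h_i:=I(\text{$i$-th premise schema})$ is computable from $I$. It then remains, for each $i$, to decide whether there is some $s\in S$ with $s\qo h_i$; this is decidable because $S$ is finite and $\qo$ is decidable by Lemma~\ref{lem:computabilityqo}. Declaring $h\in\WI(S,\Omega,\C)$ exactly when some schema $r$ and some matching $I$ pass all $n$ premise tests yields a decision procedure for membership, and running it over the finite candidate set of $S$-{\thin} $\Omega$-hypersequents computes $\WI(S,\Omega,\C)$ from $S$.

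The main obstacle to guard against is the a priori unbounded size of the premises of a rule instance: the hypersequent-variable alone could be instantiated by arbitrarily large hypersequents, so a naive search over all rule instances would not terminate. This is exactly what the subvariable property together with the fixed, finite conclusion $h$ rules out, since no material can appear in a premise beyond what matching the conclusion already fixes; meanwhile the $S$-{\thin} bound keeps the set of conclusions to be examined finite. The remaining steps---matching a schematic conclusion against a concrete hypersequent, splitting finite multisets, and testing $\qo$---are routine finite computations.
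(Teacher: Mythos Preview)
Your proof is correct and takes a genuinely different route from the paper's. The paper argues \emph{forward}: it first bounds the shape of any premise that could participate in producing an $S$-\thin\ conclusion (introducing an auxiliary notion of ``\slim'' hypersequent and a bound on component multiplicities via the subvariable property), then for each $h\in S$ builds a finite tree $\tau(\reduct{h})$ of $2$-reduced $\Omega$-hypersequents reachable under the weak structural rules (relying on Lemma~\ref{lem-tworeduced}), harvests the finitely many ``potential premises'' from these trees, and finally applies all rule instances to them and keeps the $S$-\thin\ conclusions. You argue \emph{backward}: enumerate the finitely many $S$-\thin\ $\Omega$-hypersequents directly, and for each candidate conclusion $h$ use the linear-conclusion and subvariable properties to see that there are only finitely many rule instances with conclusion $h$, each with completely determined premises; then test those premises against $S$ via the decidability of~$\qo$.

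Your approach is more elementary: it avoids the ``\slim'' notion, the $2$-reduct machinery of Lemma~\ref{lem-tworeduced}, and the tree construction entirely. The paper's approach, by contrast, makes the \emph{premise} side explicit and yields concrete size bounds there, which fits the paper's forward-search narrative; but for the bare computability statement your argument is both shorter and conceptually cleaner. One small remark: the identity initial schema $H\mid A\Ra A$ is not literally linear in the conclusion (the formula-variable $A$ occurs twice), so your appeal to linearity is not quite uniform---but you already handle initial schemas separately as a direct finite check, so this does not create a gap.
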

\begin{proof}
Call a hypersequent \textit{\slim} if no component in it has an antecedent of cardinality $> \maxelesize{S}.\maxelesize{\C}^2.|\Omega|$. Evidently each $S$-{\thin} hypersequent is also {\slim}. 

Recall that each element of~$\WI(S,\Omega,\C)$ is a {\thin} conclusion of a rule instance from~$\C$.
We first show that the premises of such a rule instance must be {\slim}. Suppose towards a contradiction that the antecedent of some component in a premise of the rule instance has cardinality larger than $\maxelesize{S}.\maxelesize{\C}^{2}.|\Omega|$. Either this component is in the instantiation of the hypersequent-variable, or
the instantiation of some multiset-variable in that premise has cardinality larger than $\maxelesize{S}.\maxelesize{\C}.|\Omega|$ (using the fact that the number of schematic-variables in a rule schema is bounded by~$\maxelesize{\C}$).
Because this hypersequent-/multiset-variable occurs also in the conclusion of the rule (Def.~\ref{def-linear-subvariable}, subvariable property), the conclusion of the rule instance would not be {\thin}, contradicting the definition of $\WI(S,\Omega,\C)$.

Let~$N$ denote the number of different {\slim} \emph{sequents} that can be built from~$\Omega$. 
%The number of hypersequents with~$n$ components comprising of such sequents is bounded by~$N^{n}$. 
If a premise instance has more than~$\maxelesize{\C}+N.\maxelesize{\C}$ components---noting that the number of components in each rule schema is bounded by~$\maxelesize{\C}$---then the instantiation of the hypersequent-variable would contain more than~$N.\maxelesize{\C}$ components, and hence more than $\maxelesize{\C}$ copies of the same sequent, so the conclusion of the rule instance would not be {\thin}.

The above two paragraphs show that if a hypersequent is in~$\WI(S,\Omega,\C)$ then it is the $S$-{\thin} conclusion of a rule instance from~$\C$ with premises that are {\slim} and do not contain a sequent with multiplicity $>\maxelesize{\C}+N.\maxelesize{\C}$. 
%Due to the latter restrictions there are at most~$N^{N'}$ candidates for the premises. 

For each $h\in S$ that is {\slim}, define the tree~$\tau(\reduct{h})$ of $2$-reduced hypersequents whose root is~$\reduct{h}$ and the set of children of an interior node~$g$ is the set of hypersequents that are {\slim} and $2$-reduced and can be obtained by applying a single weak structural rule (restricted to~$\Omega$) to~$g$, omitting hypersequents that already appear on the path from the root to~$g$.
Due to this `omitting condition' the height of~$\tau(\reduct{h})$ is at most~$3^{N}$---each of the $N$ {\slim} sequents has multiplicity in $\{0,1,2\}$---and hence $\tau(\reduct{h})$~is finite and computable.

Assume now that $h'$ is a \textit{potential premise} for an element of $\WI(S,\Omega,\C)$, i.e. $h'$ is {\slim}, does not contain a sequent with multiplicity $>\maxelesize{\C}+N.\maxelesize{\C}$, and there exists $h\in S$ such that $h\qo h'$. We have the following derivation composed of weak structural inferences:
\[
h\overset{\text{(EC) rules}}\qo \reduct{h}\overset{\text{$2$-reduced derivation Lem.~\ref{lem-tworeduced}}}{\qo} \reduct{h'}\overset{\text{(EW) rules}}{\qo} h'
\]
Since~$h'$ is {\slim}: every hypersequent in this derivation must be {\slim} (including~$h$!), since weak structural rules cannot decrease the cardinality of a component's antecedent. Thus $\tau(\reduct{h})$ is defined and~$\reduct{h'}$ is a node in it.
We can generate all such~$h'$ (in fact there are at most~$N^{\maxelesize{\C}+N.\maxelesize{\C}+1}$ possibilities). 

$\WI(S,\Omega,\C)$ is the set of $S$-{\thin} conclusions of rule instances from~$\C$ whose premises are among these potential premises. Evidently this set is computable.
%We conclude that $\WI(S,\Omega,\C)$ is the set of $S$-{\thin} conclusions of rule instances from~$\C$ such that 
%each premise is {\slim}, every sequent in it has multiplicity $\leq \maxelesize{\C}+N.\maxelesize{\C}$, and its $2$-reduct is a node in the tree~$\tau(\reduct{h})$ for some slim $h\in S$. Evidently this set is computable.
\end{proof}

\begin{lemma}
\label{lem:WIinfty}
If $h \in \WI^{\infty}(S,\Omega,\C)$ then there exists some $h' \in \WI(S,\Omega,\C)$ such that $h' \qo  h$.
\end{lemma}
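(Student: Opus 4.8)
The plan is to unfold what $h\in\WI^{\infty}(S,\Omega,\C)$ gives us---a rule instance $I(r)$ of some schema $r\in\C$ with premises $h_1,\ldots,h_n$ and elements $\minus{h_1},\ldots,\minus{h_n}\in S$ satisfying $\minus{h_i}\qo h_i$---and then to \emph{trim} the instantiation $I$ to a new instantiation $I'$ of the \emph{same} schema $r$ whose conclusion $h'=I'(r)$ is $S$-{\thin} and satisfies $h'\qo h$, while its premises $h_i'=I'(r_i)$ still obey $\minus{h_i}\qo h_i'$ for the same $\minus{h_i}\in S$. Once this is done, $h'$ witnesses membership in $\WI(S,\Omega,\C)$ via those same $\minus{h_i}$, and $h'\qo h$ is exactly the conclusion. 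Throughout I would use the characterisation of Lemma~\ref{lem:computabilityqo}: $\minus{h_i}\qo h_i$ means that for every component $c=(\Gamma\Ra\Pi)$ of $\minus{h_i}$ there is a \emph{dominating} component $d(c)=(\Gamma'\Ra\Pi)$ of $h_i$ with $\Gamma\subseteq\Gamma'$. I fix one such choice $d(c)$ per component $c$ of each $\minus{h_i}$; since $\minus{h_i}\in S$ and there are at most $\maxelesize{\C}$ premises, there are only boundedly many targets $c$, each with antecedent of cardinality $\leq\maxelesize{S}$. Every dominator $d(c)$ is either (a)~a component coming verbatim from the instantiation $I(H)$ of the hypersequent-variable, or (b)~a \emph{principal} premise component, whose antecedent is a sum of instantiated multiset-variables, formula-variables and schematic-formulas.

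I would then define $I'$ as follows. For the hypersequent-variable, let $I'(H)$ keep only those components of $I(H)$ that serve as some dominator of type~(a), one copy each, and shrink the antecedent of each such component to the \emph{pointwise maximum} of the antecedents of all targets it dominates (across all premises). Since the original antecedent dominates each of those targets, it dominates their pointwise maximum, so the shrunk component is a submultiset of the original and still dominates every target assigned to it; moreover, being a maximum of multisets over $\Omega$ of cardinality $\leq\maxelesize{S}$, the shrunk antecedent has cardinality $\leq\maxelesize{S}.|\Omega|$. For every multiset-variable $Z$ of $r$, let $I'(Z)$ cap each formula's multiplicity at $\maxelesize{S}$, i.e. $I'(Z)(f)=\minimum(I(Z)(f),\maxelesize{S})$ for $f\in\Omega$, and leave all formula- and succedent-variable instantiations unchanged. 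As $I'$ assigns each schematic-variable an object of the correct type, $I'(r)$ is a genuine rule instance.

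I would then verify three things. For $h'\qo h$: every component of $h'$ arises from a component of $h$ by deleting antecedent formulas (capping, or shrinking an $I(H)$-component) or by dropping whole components, with the succedent unchanged; hence every component of $h'$ is dominated by a component of $h$, so $h'\qo h$ by Lemma~\ref{lem:computabilityqo}. For $\minus{h_i}\qo h_i'$: each target $c$ is still dominated, because if $d(c)$ is of type~(a) its shrunk antecedent still contains $\Gamma$ by construction, and if $d(c)$ is of type~(b) then capping preserves domination since every multiplicity in $\Gamma$ is $\leq\maxelesize{S}$---here one uses $\sum_{a}\minimum(x_a,M)\geq\minimum(\sum_{a}x_a,M)$ to see that the capped multiset-variables still jointly cover $\Gamma$. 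For $h'$ being $S$-{\thin}: by the linear-conclusion property each multiset-variable occurs exactly once in the conclusion, so every conclusion antecedent is a sum of boundedly many capped instantiations (each of cardinality $\leq\maxelesize{S}.|\Omega|$) together with the boundedly many schematic-formula occurrences, while the retained $I(H)$-components have single copies and small antecedents; the numerical constants in the $S$-{\thin} bounds are exactly what is needed for both the antecedent-cardinality and the sequent-multiplicity estimates to go through.

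The main obstacle is that the hypersequent-variable and the multiset-variables occur \emph{both} in the premises and in the conclusion---and by the subvariable property no premise variable escapes the conclusion---so any trimming that shrinks the conclusion simultaneously shrinks the premises and threatens the domination $\minus{h_i}\qo h_i$ that certifies membership in $\WI^{\infty}$. The resolution is to trim these shared variables \emph{uniformly and only as much as the targets in $S$ demand}: capping multiset-variable multiplicities at $\maxelesize{S}$ is small enough---via the linear conclusion---to force $S$-{\thin}ness of the conclusion, yet large enough to retain domination of every antecedent occurring in an element of $S$, and replacing each used $I(H)$-component by the pointwise maximum of the targets it dominates achieves the same balance for the hypersequent-variable.
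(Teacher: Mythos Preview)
Your approach differs substantially from the paper's. The paper argues \emph{iteratively}: if $h$ is not $S$-{\thin}, it locates a single violation---either one formula~$A$ with excessive multiplicity in some conclusion antecedent, or one component with excessive multiplicity---and produces a new instance $I'(r)$ whose conclusion $h'$ satisfies $h'\qo h$ and $\hypsize{h'}<\hypsize{h}$; iteration terminates because $\hypsize{\cdot}$ strictly decreases. To check that $\minus{h_k}\qo h_k'$ survives each step, the paper does \emph{not} use the characterisation of Lemma~\ref{lem:computabilityqo}; instead it traces ``marked components'' through the explicit weak-structural derivation witnessing $\minus{h_k}\qo h_k$ and shows by induction on that derivation that one can lower the $A$-multiplicity in every marked component to exactly~$\maxelesize{S}$. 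Your one-shot trimming via Lemma~\ref{lem:computabilityqo} is more direct and avoids the derivation-tracing; conversely, the paper's argument never needs the inequality $\sum_a\minimum(x_a,M)\geq\minimum(\sum_a x_a,M)$, since it reduces one formula in one variable at a time.

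There is, however, a gap in your sequent-multiplicity estimate. You retain ``one copy each'' of the components of $I(H)$ that serve as a type-(a) dominator and then shrink their antecedents to a pointwise maximum. But distinct retained components can \emph{coincide} after shrinking, and any of them can also coincide with an active conclusion component of $I'(r)$. Hence the multiplicity of a single sequent in $h'$ is a priori bounded only by the number of type-(a) dominators (up to $\maxelesize{S}\cdot\maxelesize{\C}$, one per component of each $\minus{h_i}$) plus the number of active conclusion components, and this can exceed~$\maxelesize{\C}$. The repair is easy---dedupe $I'(H)$ \emph{after} shrinking, which preserves both $h'\qo h$ and $\minus{h_i}\qo h_i'$ because, by Lemma~\ref{lem:computabilityqo}, $\qo$ depends only on the \emph{set} of dominating components---but it should be made explicit before claiming that ``the numerical constants in the $S$-{\thin} bounds are exactly what is needed''. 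With that amendment your argument goes through.
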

\begin{proof}
By definition $h$ is the conclusion of a rule instance~$I(r)$ of some rule schema $r \in \C$ with premises $h_1,\ldots,h_n$ such that 
$\{\minus{h_1},\ldots,\minus{h_n}\}\subseteq S$ and for each~$i$: $\minus{h_i} \qo  h_i$.
If~$h\in \WI(S,\Omega,\C)$ there is nothing to do. So suppose that $h\in \WI^{\infty}(S,\Omega,\C)\setminus \WI(S,\Omega,\C)$. 
Therefore~$h$ is not $S$-{\thin}.

We will construct a new rule instance $I'(r)$~whose premises are obtained from~$S$ under~$\qo$ and with conclusion~$h'\in \WI^{\infty}(S,\Omega,\C)$ satisfying $h'\qo  h$ and $\hypsize{h'}<\hypsize{h}$. The result then follows by iterating this construction until a rule instance~$I\phantom{}'\phantom{}^{\cdots}\phantom{}'(r)$ is reached whose conclusion is in $\WI(S,\Omega,\C)$. In particular, termination is guaranteed because the number of formulas in the conclusion of the new rule instance strictly decreases with each iteration.

Let us construct this~$I'$. Since~$h$ is not $S$-{\thin}, either:

$\blacktriangleright$ Some component~$s$ in~$h$ has an antecedent of cardinality $> \maxelesize{S}.\maxelesize{\C}.|\Omega|$: 

Some formula $A\in\Omega$ must then occur in the antecedent of~$s$ with multiplicity $> \maxelesize{S}.\maxelesize{\C}$. 
Either~$s$ is in the instantiation of the hypersequent-variable (case~i), or else $s$~is an active component (i.e. the component not in the instantiation of the hypersequent-variable) of the rule instance (case~ii). In the latter case---since the number of schematic-variables in any rule schema is $\leq \maxelesize{\C}$, and using the linear conclusion (Def.~\ref{def-linear-subvariable}) of~$\C$---there is some multiset-variable~$M$ such that $I(M)$ contains $\geq \maxelesize{S}+1$ occurrences of~$A$.

In case~i, there is a corresponding component of~$s$ in each premise~$h_{k}$; call it a \textit{marked-component}. In case~ii, call each component in each premise~$h_{k}$ that corresponds to a component containing~$M$ in the rule schema a \textit{marked component}. Let~$d_{k}$ be the derivation comprising of weak structural rules witnessing $\minus{h_{k}}\qo h_{k}$. Extend the definition of~\textit{marked-components} to the smallest set of components in the derivation~$d_{k}$ as follows: if the active component in the conclusion of (lw)  or (EC) is a marked-component, then so are the active component(s) in the premise; for every non-active component in the conclusion that is a marked-component, the corresponding component in each premise is a marked-component.
\begin{flushleft}
\textbf{Claim.} For every hypersequent~$g$ in~$d_{k}$: there is~$g^{A}$ identical to~$g$ except that the number of occurrences of the formula~$A$ in the antecedent of each marked-component is exactly~$\maxelesize{S}$, and $\minus{h_{k}}\qo g^{A}$.
\end{flushleft}
Induction on the number of rules in~$d_{k}$. If the number of rules is~$0$ (i.e. $h_{k}=\minus{h_{k}}$) then~$h_{k}\in S$ so every component in~$h_{k}$ has size~$\leq\maxelesize{S}$. Apply (lw) with~$A$ to~$\minus{h_{k}}$ as much as required in order to obtain~$h_{k}^{A}$ (hence $\minus{h_{k}}\qo h_{k}^{A}$). 

Inductive case. If the last rule is~(EC) then apply the induction hypothesis to the premise and then apply~(EC).
If the last rule is~(EW) and the introduced component is not a marked-component then apply the induction hypothesis to the premise and apply~(EW) unchanged; if it is a marked-component then use (EW) to introduce a variant where the number of occurrences of~$A$ is exactly~$\maxelesize{S}$. If the last rule is~(lw) then apply the induction hypothesis to the premises and reapply~(lw) only if it does not introduce an occurrence of~$s$ into a $M$-component. This establishes the claim.

If we had case~i then define the instantiation~$I'$ as~$I$ except the marked-component now contains exactly $\maxelesize{S}$ copies of~$A$ in the antecedent, rather than $> \maxelesize{S}.\maxelesize{\C}$ as in $I(r)$. So $h_k'=h_k^A$ for each premise~$h_{k}'$ of~$I'(r)$.

If we had case~ii then define the instantiation~$I'$ as~$I$ except~$I'(M)$ now contains exactly $\maxelesize{S}$ copies of~$A$, rather than $\geq \maxelesize{S}+1$ as in $I(r)$. Each premise~$h_{k}'$ of~$I'(r)$ is the same as~$h_{k}^{A}$ except that the marked-components have $\geq \maxelesize{S}$ copies of~$A$ (the exact number depends on the multiplicity of $M$ in the marked component) rather than exactly~$\maxelesize{S}$ copies of~$A$.

In both cases, it follows that~$h_{k}'$ is identical to $h_k^A$ or can be obtained from~$h_{k}^{A}$ by (lw), and so $h_k^-\qo h_k^A \qo h_k'$. The conclusion~$h'$ of~$I'(r)$ is the same as~$h$ but with fewer occurrences of~$A$. So $h'\in \WI^{\infty}(S,\Omega,\C)$ and $\hypsize{h'}<\hypsize{h}$.

%($\dag$) By construction, $I'(r)$~is a rule instance whose premises are obtained from~$S$ under~$\qo$ and whose conclusion~$h'$ satisfies $h'\qo  h$ and $|h'|<|h|$ (since smaller instantiations yield smaller conclusions, as per the form of the rule in~$\C$, see Def.~\ref{def-analytic}). 

%We claim that for every premise~$h_{k}$, there is a hypersequent~$h_{k}'$ identical to~$h_{k}$ except that the number of copies of~$A$ in the antecedent of every component.
%
%This means that \textit{every} occurrence of~$G$ in the $k^{\text{th}}$ premise~$h_k$ of~$r$ is instantiated with $>\maxelesize{S}$ many copies of~$A$. The total number of copies of~$A$ in~$\minus{h_k}$ is $\le\maxelesize{S}$ since $\minus{h_k}\in S$, so from $\minus{h_k}\qo h_k$ it follows that at least one occurrence of~$A$ in \textit{every} instantiation~$I(G)$ in the premises must have been introduced by (lw) or (EW). By omitting a (lw) or by replacing with another instance of (EW) as required we can obtain a new rule instance~$I'(r)$ such that every~$G$ is instantiated by one less occurrence of~$A$. By construction, $I'(r)$~is a rule instance whose premises are obtained from~$S$ under~$\qo$ and whose conclusion~$h'$ satisfies $|h'|<|h|$ and $h'\qo  h$. 
%%The conclusion of $r'$ will then be $h[I(G) \mapsto I(G) - \{A\}]$.

$\blacktriangleright$ Some sequent~$s$ in~$h$ has multiplicity $>\maxelesize{\C}$:

Since the number of components in the conclusion of every rule schema is $\leq \maxelesize{\C}$, the instantiation of the hypersequent-variable in the conclusion of $I(r)$ must have the form $s|s|g$. This hypersequent-variable occurs in every premise of the rule schema. Apply~(EC) to each premise~$h_{k}$ to convert $s|s|g$ to~$s|g$ and call this~$h_{k}^*$. 

The rule instance~$I'(r)$ from premises $h_1^*,\ldots,h_n^*$ has conclusion~$h'$ that is the same as~$h$ but with one less component of~$s$. So $h'\in \WI^{\infty}(S,\Omega,\C)$ and $\hypsize{h'}<\hypsize{h}$.
%Therefore in each case we obtain a new rule instance~$I'(r)$ on which we can repeat the argument until its conclusion is in $\WI(S,\Omega,\C)$. Termination of this repeated argument is guaranteed because the size of the conclusion of the new rule instance strictly decreases with each iteration of the argument.
\end{proof}
%Note that the linear conclusion and coupling properties from Def.~\ref{def-analytic} were not used in the above proof. Indeed, if a multiset 
%variable~$X$ was to occur twice in the conclusion, then when we replace $A,A$ with $A$, the number of $A$'s in the conclusion would go down by~$2$ instead of~$1$. The reason why this is not an issue for the above argument is because we only care about reachability wrt $\qo$  i.e. left weakening can bring back however many occurrences are missing. Nevertheless these two properties are crucial for the cut-admissibility argument (see \cite{CiaGalTer08}), and it is cut-admissibility that enables us to relate the analytic structural rule schema extensions of~$\HFLe$ with the logic (Thm.~\ref{thm-CiaGalTer08}).

\begin{definition}[derivability sets $S_i$ of $\HFLelw+R$ wrt $\Omega$]\label{def-derive-sets}
Define $S_{0}$ as the set of instances of initial sequent schemas in~$\C=\HFLelw+R$ for
\begin{itemize}
\item formula-variables instantiated using elements from~$\Omega$, 
\item succedent-variables instantiated by an element in~$\Omega$ or empty, and 
\item hypersequent- and multiset-variables instantiated as empty.
\end{itemize}
%\begin{multline*}
%S_0:=\{A\Ra A | A \in \Omega \}\cup\{\Ra 1\}\cup\{0\Ra \}\cup \{\Ra\bot\} \cup \{\top\Ra\}\cup \{ \top\Ra A | A\in\Omega\}\cup\\
%	\{ \text{all instances of initial sequents in~$\C$ s.t. only non-empty instantiation is}\\\text{succedent-variable $\mapsto$ element in~$\Omega$ %or empty} \}
%\end{multline*}
\begin{multline*}
\text{For $i>0$ define }S_{i+1}:=S_i \cup \{ h \in \WI(S_i,\Omega,\C) |\\ \text{there does not exist $h' \in S_i$ s.t. $h' \qo  h$} \}
\end{multline*}
\end{definition}
Since~$\Omega$ and~$\C$ are finite, $S_{0}$ is a finite set consisting of elements like $A\Ra A$, $\Ra 1$, $0\Ra$, $\bot\Ra$, and $\bot\Ra A$ for each $A\in\Omega$. Also: $S_i \subseteq S_{i+1}$ for every~$i$.
%Intuitively: each hypersequent in the set~$S_{i+1}$ is obtained by applying a single rule from~$\C$ to premises obtained from~$S_{i}$ by using~(EC) and a restricted amount of internal and external weakening by~$\Omega$ formulas, disregarding those hypersequents than are obtainable from~$S_{i}$ under~$\qo$.
%Moreover, by Lem.~\ref{lem:WIinfty}, every hypersequent that cannot be obtained in this way due to the restriction to a limited amount of weakening can be obtained from~$S_{i+1}$ under~$\qo$.

\begin{lemma}
\label{lem:SN}
Let~$h$ be a hypersequent and let $\Omega$ be any finite set containing all subformulas of $h$.
If~$h$ is derivable in $\C=\HFLelw  + R$ then there is~$N$ and~$h' \in S_N$ such that~$h' \qo  h$.
\end{lemma}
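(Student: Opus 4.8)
The plan is to prove the statement by induction on the height of a derivation of $h$ in $\C=\HFLelw+R$, keeping $\Omega$ fixed throughout. The key preliminary observation is that, by the subformula property of $\C$ (recorded after Theorem~\ref{thm-CiaGalTer08}), every formula occurring anywhere in the derivation of $h$ is a subformula of some formula in $h$, hence lies in $\Omega$; since the subformula relation is transitive, $\Omega$ in fact contains all subformulas of every hypersequent appearing in the derivation. Consequently every such hypersequent is an $\Omega$-hypersequent, so that the relation $\qo$ and the sets $S_i$ (which are defined relative to this single $\Omega$) apply uniformly to all of them, and the induction hypothesis may be applied to the premises of the last rule with the same $\Omega$.

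For the base case, $h$ is an instance of an initial rule schema of $\C$. Taking the ``minimal'' instance of that schema---with the hypersequent-variable and all multiset-variables instantiated as empty, and the formula- and succedent-variables instantiated as in $h$---yields an element $g'$ of $S_0$ (Definition~\ref{def-derive-sets}). Since every formula and component discarded in passing from $h$ to $g'$ consists of $\Omega$-formulas, one recovers $h$ from $g'$ by re-inserting the missing antecedent formulas with (lw) and then re-inserting the missing components with (EW); this is exactly a witness for $g'\qo h$, as required.

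For the inductive step, suppose $h$ is the conclusion of a rule instance $I(r)$ of some $r\in\C$ with premises $g_1,\ldots,g_n$. By the induction hypothesis applied to each $g_j$ there are $N_j$ and $g_j'\in S_{N_j}$ with $g_j'\qo g_j$. Putting $M:=\max_j N_j$ and using $S_i\subseteq S_{i+1}$, we may assume $g_j'\in S_M$ for all $j$. But this is precisely the defining condition of $\WI^{\infty}$: the instance $I(r)$ exhibits $h$ as the conclusion of a rule from $\C$ whose premises $g_j$ are each $\qo$-above an element $g_j'$ of $S_M$, so $h\in\WI^{\infty}(S_M,\Omega,\C)$. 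Crucially we obtain membership only in $\WI^{\infty}$, not in $\WI$, because $h$ need not be $S_M$-thin. Now Lemma~\ref{lem:WIinfty} supplies $h''\in\WI(S_M,\Omega,\C)$ with $h''\qo h$. Finally, inspecting the construction of $S_{M+1}$ in Definition~\ref{def-derive-sets}: either some $g^{*}\in S_M$ satisfies $g^{*}\qo h''$---in which case $g^{*}\qo h$ by transitivity of $\qo$ and we take $N:=M$, $h':=g^{*}$---or no such $g^{*}$ exists, in which case $h''\in S_{M+1}$ by definition and we take $N:=M+1$, $h':=h''$, again with $h'\qo h$.

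The routine parts are the base case and the bookkeeping that keeps every hypersequent an $\Omega$-hypersequent; the conceptual heart of the argument, and the step I expect to require the most care, is the inductive step. There the difficulty is that substituting the smaller premises $g_j'$ back into the rule does not directly land in the thin set $\WI(S_M,\Omega,\C)$---the conclusion built from $S_M$-members can fail to be thin---so one reaches only the infinite approximant $\WI^{\infty}$. Lemma~\ref{lem:WIinfty} is exactly the tool that shrinks this back down to a thin hypersequent lying $\qo$-below $h$, after which the saturation built into the definition of $S_{i+1}$ (together with transitivity of $\qo$) closes the loop.
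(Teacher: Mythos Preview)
Your proposal is correct and follows essentially the same approach as the paper's proof: induction on derivation height, with the base case handled by observing that the minimal initial instances lie in $S_0$ and $\qo$-dominate the full instance, and the inductive step handled by collecting the $S_{N_j}$ from the induction hypothesis into a common $S_M$, concluding $h\in\WI^{\infty}(S_M,\Omega,\C)$, invoking Lemma~\ref{lem:WIinfty} to obtain a thin $h''\qo h$, and then using the definition of $S_{M+1}$ (with the case split on whether $h''$ is already $\qo$-above something in $S_M$). Your write-up is in fact slightly more explicit than the paper's about the subformula-property justification and the final case distinction, but the argument is the same.
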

\begin{proof}
Induction on the height of the derivation of~$h$.

If $h$ has a derivation of height $1$ then it is an instance of an initial rule schema. Therefore it is obtainable from~$S_0$ by (lw) and (EW) as required. 

Inductive case. Suppose $h$ has a derivation of height $>1$ with last rule~$r$. By the subformula property, every subformula in each premise~$h_{k}$ is in~$\Omega$. By the induction hypothesis applied to the $k^{\text{th}}$ premise, there exists $N_{k}$ and a hypersequent $h_k' \in S_{N_{k}}$ s.t. $h_k' \qo  h_k$. 
Since $S_i \subseteq S_{i+1}$ for every~$i$, every $h_{k}'\in S_N$ for $N:=\max\{ N_{k}\}$.
Therefore~$h\in \WI^{\infty}(S_N,\Omega,\C)$ by Def.~\ref{def-WI}. By Lem.~\ref{lem:WIinfty}, there exists $\minus{h}\in \WI(S_N,\Omega,\C)$ such that $\minus{h}\qo h$. If $h^-$ is reachable from $S_N$ via $\qo$ then so is $h$ by transitivity and the claim follows. Otherwise by definition $h^-\in S_{N+1}$, and the claim follows.
\end{proof}

\subsection{Stability of $(S_{i})$ and decidability}

\renewcommand{\qo}{\rightbarharpoon_{\Omega}} %changed (TIMO)

The finite powerset of $k$-tuples is defined
\[
\mathcal{P}_f(\mathbb{N}^{k})=\{U\in \mathcal{P}(\mathbb{N}^{k}) | \text{$U$ is finite} \}
\]
The \textit{majoring ordering} is used in the proof of Lem.~\ref{lem-stabilises}. 
\begin{definition}[majoring ordering] 
Let $X,Y\in P_f(\mathbb{N}^{k})$. Let~$\leq$ denote the usual componentwise ordering on $k$-tuples of natural numbers.
The \emph{majoring ordering} is defined
\[
\text{$X \maj Y$ iff $\forall x\in X\exists y\in Y(x\leq y)$}
\]
\end{definition}
%A quasi-ordering is a reflexive and transitive binary relation.
We say that $(A,\leq_{A})$ is a \textit{well-quasi-ordering} (wqo) if~$\leq_{A}$ is a quasi-ordering on~$A$ and for every infinite sequence $(a_i)$ over $A$ there exists $i,j$ ($i<j$) s.t. $a_i \leq_{A} a_j$.
For $d>0$ and
\[
\text{$(X_{1},\ldots,X_{d}), (Y_{1},\ldots,Y_{d})\in (P_f(\mathbb{N}^{k}))^{d}$ (written $P_f(\mathbb{N}^{k})^{d}$)}
\] 
define the \textit{$d$-majoring ordering}
\begin{multline*}
\text{$(X_{1},\ldots,X_{d})\maj^{d} (Y_{1},\ldots,Y_{d})$ iff}\\ \text{$X_{i}\maj Y_{i}$ for every~$i$ ($1\leq i\leq d$)}
\end{multline*}
\begin{theorem}\label{thm-majoring-wqo}
Let $k,d>0$. Then\\
(i)~$(P_f(\mathbb{N}^{k}), \maj)$ is a well-quasi-ordering.\\
(ii)~$(P_f(\mathbb{N}^{k})^{d}, \maj^{d})$ is a well-quasi-ordering.
\end{theorem}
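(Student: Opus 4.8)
The plan is to reduce both parts to two classical facts: \emph{Dickson's lemma}, that $(\mathbb{N}^{k},\leq)$ with the componentwise ordering is a wqo, and \emph{Higman's lemma}, that finite sequences over a wqo form a wqo under the subsequence-embedding ordering. The only genuinely new ingredient is a short translation between word embedding and the majoring relation, so the bulk of the work is organisational rather than conceptual.

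For part (i), I would first fix, for each $X\in P_f(\mathbb{N}^{k})$, an arbitrary enumeration of its elements as a finite word $w(X)\in (\mathbb{N}^{k})^{*}$. Since $(\mathbb{N}^{k},\leq)$ is a wqo by Dickson's lemma, Higman's lemma gives that $((\mathbb{N}^{k})^{*},\leq_{*})$ is a wqo, where $u\leq_{*}v$ means there is a strictly increasing map $\phi$ from the positions of $u$ into the positions of $v$ with $u_{l}\leq v_{\phi(l)}$ for every position $l$. The key auxiliary claim is: if $w(X)\leq_{*}w(Y)$ then $X\maj Y$. This is immediate from the definition of embedding, since each $x\in X$ occurs as some $w(X)_{l}$ and the witness $\phi$ supplies $w(Y)_{\phi(l)}\in Y$ with $x\leq w(Y)_{\phi(l)}$, which is exactly the majoring condition. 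Now, given any infinite sequence $(X_{i})$ in $P_f(\mathbb{N}^{k})$, applying Higman's lemma to $(w(X_{i}))$ yields indices $i<j$ with $w(X_{i})\leq_{*}w(X_{j})$, hence $X_{i}\maj X_{j}$, which is precisely the wqo condition for $\maj$.

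For part (ii), I would use closure of wqos under finite products. Recalling the standard characterisation that $(A,\leq_{A})$ is a wqo iff every infinite sequence over $A$ has an infinite non-decreasing subsequence, I would proceed by induction on $d$ with base case part (i): given an infinite sequence in $P_f(\mathbb{N}^{k})^{d}$, extract an infinite subsequence that is non-decreasing in the first $d-1$ coordinates (using the induction hypothesis and this characterisation), and then apply part (i) once more to the final coordinate along that subsequence to locate an increasing pair $i<j$ in all $d$ coordinates simultaneously, i.e. $(X_{1},\ldots,X_{d})_{i}\maj^{d}(X_{1},\ldots,X_{d})_{j}$.

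The main obstacle is essentially bookkeeping about direction. In part (i) one must check that embedding the \emph{earlier} word into the \emph{later} word produces $X_{i}\maj X_{j}$ with $i<j$, matching the wqo requirement, and that the translation uses only the \emph{existence} of an embedding and not any canonical choice of enumeration $w(X)$ (the empty set maps to the empty word, which embeds into everything, and causes no difficulty). Part (ii) rests only on the non-decreasing-subsequence characterisation of wqos, which itself follows from Ramsey's theorem or a minimal-bad-sequence argument.
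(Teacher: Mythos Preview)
Your proposal is correct. For part~(ii) you take exactly the same route as the paper---closure of wqos under finite products---only you sketch the standard non-decreasing-subsequence argument whereas the paper simply cites the result from~\cite{ICALP}. For part~(i) the paper again just cites an external reference~\cite{Linearise}, while you give a self-contained proof via Dickson's lemma and Higman's lemma; your translation from word embedding to the majoring relation is correct (only the forward implication $w(X)\leq_{*}w(Y)\Rightarrow X\maj Y$ is needed, and that is immediate). So your argument is more informative than the paper's, which defers both parts to the literature, but there is no substantive divergence in strategy.
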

\begin{proof}
(i) See Section 2 of~\cite{Linearise}. (ii) Given two wqos $(A,\leq_{A})$ and $(B,\leq_{B})$ define the following relation $\leq_{A \times B}$ on elements of $A \times B$: $(a,b) \leq_{A \times B} (a',b')$ iff $a \leq_A a'$ and $b \leq_B b'$. Section 2 of~\cite{ICALP} establishes that
$(A \times B, \leq_{A \times B})$ is also a wqo. Since $(P_f(\mathbb{N}^k),\maj)$ is a wqo, it follows that $(P_f(\mathbb{N}^{k})^{d}, \maj^{d})$ is also a wqo.
\end{proof}
\subsubsection*{From $\Omega$-hypersequent to an element of $\powf{\nn^{|\Omega|}}^{|\Omega|+1}$}
Let~$\Omega$ be a finite set of formulas and let $h$ be an $\Omega$-hypersequent. Fix any enumeration $F_1,\dots,F_{|\Omega|}$ of $\Omega$ and let $F_0$ denote the empty formula. 
Let $h_i$ ($0\leq i\leq |\Omega|$) denote the subhypersequent $X_i^1 \Ra F_i \ | \ X_i^2 \Ra F_i \ | \ \dots \ | \ X_i^n \Ra F_i$ consisting of exactly those components in~$h$ whose succedent is~$F_i$. Set $h_i^\# = \{x_i^1,\dots,x_i^n\} \in \pow{|\Omega|}$ where the value of the $k^{\text{th}}$ coordinate of the $|\Omega|$-tuple $x_i^l$ is taken to be the multiplicity of $F_k$ in the multiset~$X_i^l$.

Then $h^\# = (h_0^\#,h_1^\#,\dots,h_{|\Omega|}^\#) \in \powf{\nn^{|\Omega|}}^{|\Omega|+1}$. 
\begin{example}\label{eg-hash}
Let~$h$ be the $\{p,q,p\land q\}$-hypersequent
\[
\Ra p| p\land q, p, p \Ra p | q\Ra | q\Ra
\]
Let us use the enumeration $1\mapsto p$, $2\mapsto q$, $3\mapsto p\land q$. Then
\begin{align*}
&h_0	=q\Ra|q\Ra	&& h_1=\phantom{q}\Ra p| p\land q,p,p\Ra p\\
&h_0^\#=\{(0,1,0)\}	&& h_1^\#=\{(0,0,0),(2,0,1)\}
\end{align*}
$h_{2}$ and~$h_{3}$ are empty hypersequents so $h_2^\#=h_3^\#=\emptyset$. So
\[
h^\# = \big(\, \{(0,1,0)\} \,,\, \{(0,0,0),(2,0,1)\} \,,\, \emptyset \,,\, \emptyset \,\big) \in \powf{\nn^3}^{4}
\]
\end{example}

\begin{lemma}
\label{lem:corr}
Let $h,g$ be $\Omega$-hypersequents. Then $h\qo g$ iff $h^\# \maj^{|\Omega|+1} g^\#$.
\end{lemma}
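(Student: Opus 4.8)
The plan is to unwind both sides of the claimed equivalence into the same componentwise condition, using the characterisation of $\qo$ already established in Lemma~\ref{lem:computabilityqo}. Recall that lemma says $h\qo g$ holds iff every component $\Gamma\Ra\Pi$ of~$h$ has a ``cover'' $\Gamma'\Ra\Pi$ in~$g$ with the \emph{same succedent} and $\Gamma$ a submultiset of~$\Gamma'$. The crucial observation is that the $\#$-construction is organised precisely by succedent: $h_i$ collects exactly the components of~$h$ whose succedent is~$F_i$, and $h_i^\#$ records their antecedents as multiplicity-tuples in $\nn^{|\Omega|}$. So covering a component of~$h$ by one of~$g$ with matching succedent corresponds exactly to finding, within the \emph{same index}~$i$, a tuple $y\in g_i^\#$ dominating a given $x\in h_i^\#$ under the componentwise order~$\leq$ on $\nn^{|\Omega|}$.

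First I would make this correspondence explicit. For a fixed succedent $F_i$, a component $X\Ra F_i$ of~$h$ becomes the tuple $x\in h_i^\#$ whose $k$-th coordinate is the multiplicity of $F_k$ in~$X$; and ``$X$ is a submultiset of $X'$'' translates verbatim into ``$x\leq x'$'' coordinatewise, since submultiset containment is exactly the pointwise $\leq$ on multiplicity functions. Therefore the Lemma~\ref{lem:computabilityqo} condition restricted to succedent~$F_i$ reads: for every $x\in h_i^\#$ there is $y\in g_i^\#$ with $x\leq y$, which is literally the definition of $h_i^\#\maj g_i^\#$. Running over all succedents $F_0,\dots,F_{|\Omega|}$ (including the empty succedent $F_0$) then gives $h_i^\#\maj g_i^\#$ for every~$i$, i.e. $h^\#\maj^{|\Omega|+1}g^\#$, and conversely.

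Concretely the argument is two short directions. For the forward direction, assume $h\qo g$; fix~$i$ and an arbitrary $x\in h_i^\#$ arising from a component $X\Ra F_i$ of~$h$; Lemma~\ref{lem:computabilityqo} supplies a component $X'\Ra F_i$ of~$g$ (same succedent, so contributing a tuple to $g_i^\#$) with $X\subseteq X'$, whence the corresponding $y\in g_i^\#$ satisfies $x\leq y$; this is exactly $h_i^\#\maj g_i^\#$. For the reverse direction, assume $h^\#\maj^{|\Omega|+1}g^\#$; take any component $\Gamma\Ra\Pi$ of~$h$, let $i$ be the index with $\Pi=F_i$, read off its tuple $x\in h_i^\#$, use $h_i^\#\maj g_i^\#$ to get $y\in g_i^\#$ with $x\leq y$, and note $y$ comes from a component $\Gamma'\Ra F_i$ of~$g$ with $\Gamma$ a submultiset of~$\Gamma'$; since this holds for every component of~$h$, Lemma~\ref{lem:computabilityqo} yields $h\qo g$.

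I do not expect a genuine obstacle here, as the statement is essentially a definitional bookkeeping lemma; the only point requiring care is the bookkeeping at the multiset/tuple interface. Specifically one must check that the translation between ``multiplicity of $F_k$ in the antecedent'' and ``$k$-th coordinate'' is a faithful order-isomorphism from finite multisets over~$\Omega$ (under submultiset inclusion) to $\nn^{|\Omega|}$ (under~$\leq$), and that indexing components by succedent is consistent on both sides---in particular that components with empty succedent are handled by the $F_0$ slot. One should also be mindful that the $\#$-map can collapse repeated identical components into a single tuple (since $h_i^\#$ is a \emph{set}), but this is harmless: both $\qo$ and $\maj$ are insensitive to multiplicities of entire components, so the collapse preserves the equivalence. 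Making these identifications precise is the substance of the proof.
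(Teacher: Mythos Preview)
Your proposal is correct and follows essentially the same route as the paper: invoke the characterisation of $\qo$ from Lemma~\ref{lem:computabilityqo}, partition components by succedent index~$i$, and observe that submultiset inclusion on antecedents is exactly componentwise $\leq$ on the corresponding tuples, so the per-succedent condition is precisely $h_i^\#\maj g_i^\#$. Your additional remark that the set-valued $\#$ collapses duplicate components harmlessly is a nice sanity check the paper leaves implicit.
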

\begin{proof}
We have that $h\qo g$ iff for every component $\Gamma\Ra\Pi$ in $h$ there is a component $\Gamma'\Ra\Pi$ in $g$ with $\Gamma\subseteq \Gamma'$ (cf.\ Lemma~\ref{lem:computabilityqo}). Equivalently, for every $0\le i\le |\Omega|$ and every component $\Gamma\Ra F_i$ in $h$ there is a component $\Gamma'\Ra F_i$ in $g$ with $\Gamma\subseteq \Gamma'$. Reformulated in terms of the $\#$ function this means that $h_i^\#\maj g_i^\#$ for every $0\le i\le |\Omega|$, or equivalently $h^\# \maj^{|\Omega|+1} g^\#$.
\end{proof}

\begin{lemma}[stability]\label{lem-stabilises}
Let $(S_i)$ be the sequence of derivability sets from Def.~\ref{def-derive-sets}. There exists~$N$ s.t. $S_{N+1}(\Omega)=S_N(\Omega)$.
\end{lemma}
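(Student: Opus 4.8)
The plan is to argue by contradiction, exploiting the fact that $\qo$ is a well-quasi-ordering on $\Omega$-hypersequents. First I would transfer the wqo property through the $\#$-encoding: by Lemma~\ref{lem:corr} the map $h \mapsto h^\#$ satisfies $h \qo g$ iff $h^\# \maj^{|\Omega|+1} g^\#$, and by Theorem~\ref{thm-majoring-wqo}(ii) the structure $(\powf{\nn^{|\Omega|}}^{|\Omega|+1}, \maj^{|\Omega|+1})$ is a wqo. Hence for any infinite sequence $(h_n)$ of $\Omega$-hypersequents, applying the wqo to the image sequence $(h_n^\#)$ yields indices $i<j$ with $h_i^\# \maj^{|\Omega|+1} h_j^\#$, that is $h_i \qo h_j$. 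In other words, $(\Omega\text{-hypersequents}, \qo)$ is itself a wqo; this is the single ingredient the whole argument rests on.

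Next, suppose towards a contradiction that the sequence never stabilises. Since $S_i \subseteq S_{i+1}$ for every~$i$ (noted after Def.~\ref{def-derive-sets}), a failure to stabilise forces $S_{N+1} \supsetneq S_N$ for every~$N$, so I can choose some $h_N \in S_{N+1}\setminus S_N$ for each~$N$. Note that each $h_N$ is an $\Omega$-hypersequent, since $S_0$ and every $\WI(S_i,\Omega,\C)$ consist of $\Omega$-hypersequents; this is what makes Lemma~\ref{lem:corr} and Theorem~\ref{thm-majoring-wqo} applicable. Applying the wqo from the previous paragraph to the infinite sequence $(h_N)$ produces indices $i<j$ with $h_i \qo h_j$.

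The contradiction now comes from lining up the indices. Because $i<j$ we have $i+1 \leq j$, and monotonicity of the sequence gives $h_i \in S_{i+1} \subseteq S_j$. But $h_j$ was, by construction, added to $S_{j+1}$ while not lying in~$S_j$; by Def.~\ref{def-derive-sets} an element is admitted at this stage only when there is \emph{no} $h' \in S_j$ with $h' \qo h_j$. Taking $h' := h_i$ contradicts this, since $h_i \in S_j$ and $h_i \qo h_j$. Therefore no such strictly increasing chain can exist, and the sequence $(S_i)$ stabilises, yielding the required~$N$ with $S_{N+1} = S_N$.

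I do not anticipate a serious obstacle: the mathematical content is entirely packaged in the wqo transfer of the first paragraph, and the remainder is a standard ``bad sequence'' argument. The only point demanding care is the index bookkeeping, namely that $i+1 \leq j$ together with $S_{i+1}\subseteq S_j$ places the earlier witness $h_i$ \emph{inside} $S_j$, so that it is available to serve as the forbidden $\qo$-dominator of the freshly-added $h_j$. Getting this off-by-one right is what turns the wqo into stabilisation.
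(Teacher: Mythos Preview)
Your proof is correct and follows essentially the same approach as the paper: both argue by contradiction, extract an infinite sequence of ``fresh'' hypersequents from the strictly increasing chain $(S_i)$, and use the wqo $(\powf{\nn^{|\Omega|}}^{|\Omega|+1},\maj^{|\Omega|+1})$ via Lemma~\ref{lem:corr} to find $i<j$ with $h_i\qo h_j$, contradicting the admission condition in Def.~\ref{def-derive-sets}. The only cosmetic difference is that you first package the wqo transfer as ``$\qo$ is a wqo on $\Omega$-hypersequents'' and then apply it, whereas the paper maps the bad sequence through $\#$ at the end; the index bookkeeping also differs by one (you pick $h_N\in S_{N+1}\setminus S_N$, the paper picks $h_i\in S_i$), but both are correct.
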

\begin{proof}
Suppose not. Then $S_i \subset S_{i+1}$ for every~$i$.
Hence for any~$h_{0}\in S_{0}$ there is a sequence~$(h_{i})$ such that $h_{i+1} \in S_{i+1} \setminus S_i$ for every~$i$. Consider arbitrary $i,j$ with $i<j$ and suppose that $h_{i}\qo h_{j}$. 
Since $h_{i}\in S_{i}$ and $S_{i}\subseteq S_{i+1}$ we have $h_{i}\in S_{j-1}$ and hence $h_{j}\not\in S_{j}$ by the ``there does not exist\ldots" condition in Def.~\ref{def-derive-sets}. This is a contradiction so we conclude that $h_{i}\not\qo  h_j$.

By Lem.~\ref{lem:corr}, $h \not \qo g$ implies $h^\# \not \maj^{|\Omega|+1} g^\#$.
So $(h_i^\#)$ is a sequence in $\powf{\nn^{|\Omega|}}^{|\Omega|+1}$ such that for every~$i,j\in\mathbb{N}$ with $i<j$: $h_i^\# \not\maj^{|\Omega|+1}  h_j^\#$. This contradicts that~$\maj^{|\Omega|+1}$ is a wqo.
\end{proof}
We are ready to prove the main result of this section.
\begin{theorem}\label{thm:main-theorem}
Every analytic structural rule extension~$\C$ of $\HFLelw$ is decidable.
\end{theorem}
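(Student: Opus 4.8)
The plan is to turn the sequence of derivability sets $(S_i)$ into a terminating decision procedure. Given an input hypersequent $h$, I would first set $\Omega$ to be the (finite) set of all subformulas occurring in $h$, so that the subformula property guarantees that any derivation of $h$ in $\C=\HFLelw+R$ uses only formulas from $\Omega$. I would then compute the sets $S_0, S_1, S_2, \dots$ with respect to this $\Omega$, following Def.~\ref{def-derive-sets}. Each step is effective: $S_0$ is explicitly finite, the map $S_i \mapsto \WI(S_i,\Omega,\C)$ is computable by Lem.~\ref{lem:constr}, and the side condition ``there is no $h'\in S_i$ with $h'\qo h$'' is decidable because $\qo$ is decidable by Lem.~\ref{lem:computabilityqo} and each $S_i$ is finite. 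Since $S_{i+1}$ is a function of $S_i$ alone, I can detect termination by checking $S_{i+1}=S_i$ after each step; by the stability Lem.~\ref{lem-stabilises} such an $N$ with $S_{N+1}=S_N$ is reached, and from that point the sequence is constant, so $S_N=\bigcup_i S_i$. The algorithm then answers \emph{yes} precisely when there exists $h'\in S_N$ with $h'\qo h$ (again decidable by Lem.~\ref{lem:computabilityqo}), and \emph{no} otherwise.

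It remains to verify that this answer is correct, i.e. that $h'\qo h$ for some $h'\in S_N$ iff $h$ is derivable in $\C$. The \emph{completeness} direction (derivable $\Rightarrow$ yes) is immediate from Lem.~\ref{lem:SN}: if $h$ is derivable then some $h'\in S_{N'}$ satisfies $h'\qo h$; because the sequence is increasing and constant from $N$ onwards, $S_{N'}\subseteq S_N$, hence $h'\in S_N$ and the search succeeds. For the \emph{soundness} direction (yes $\Rightarrow$ derivable) I would establish the auxiliary claim that \textbf{every element of every $S_i$ is derivable in $\C$}, proved by induction on $i$. The base case holds because $S_0$ consists of instances of initial rule schemas of $\C$. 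For the step, any hypersequent added at stage $i+1$ lies in $\WI(S_i,\Omega,\C)$, hence is the conclusion of a $\C$-rule instance whose premises $h_1,\dots,h_n$ each satisfy $\minus{h_k}\qo h_k$ with $\minus{h_k}\in S_i$; since $\qo$ is witnessed by the weak structural rules (lw), (EC), (EW), all of which are rules of $\C$ (Fig.~\ref{figure-HFLec} together with (lw)), derivability of $\minus{h_k}$ (induction hypothesis) yields derivability of $h_k$, and applying the rule yields derivability of the conclusion. Consequently, if some derivable $h'\in S_N$ satisfies $h'\qo h$, then $h$ itself is derivable by applying weak structural rules.

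Combining the two directions shows that the procedure decides derivability of $h$, which establishes the theorem for every analytic structural rule extension $\C$ of $\HFLelw$.

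As for the main obstacle: essentially all of the difficulty has already been discharged in the supporting lemmas --- the computability of $\WI$ (Lem.~\ref{lem:constr}), which rests on the $S$-\textit{\thin} bound and the linear-conclusion/subvariable properties, and the termination of $(S_i)$ (Lem.~\ref{lem-stabilises}), which reduces via Lem.~\ref{lem:corr} to the fact that the $d$-majoring ordering is a well-quasi-ordering. Given these, the only genuinely new content at this stage is the soundness claim above, which I expect to be routine but must be stated explicitly, together with the bookkeeping observation that stabilisation of $(S_i)$ can be \emph{effectively detected} (not merely known to occur); it is this last point that converts the existence statement of Lem.~\ref{lem-stabilises} into a genuinely halting algorithm.
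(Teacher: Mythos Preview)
Your proposal is correct and follows essentially the same route as the paper's proof: compute the $S_i$ until stabilisation (via Lem.~\ref{lem:constr} and Lem.~\ref{lem-stabilises}), then check whether some $h'\in S_N$ satisfies $h'\qo h$ (via Lem.~\ref{lem:computabilityqo}), with correctness resting on Lem.~\ref{lem:SN}. Your explicit induction showing that every element of each $S_i$ is derivable, and your remark that stabilisation is effectively detectable, merely spell out what the paper dismisses as ``right to left is trivial'' and ``we can compute a finite initial segment\ldots until $S_{N+1}=S_N$''.
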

\begin{proof}
Let $h$ be a hypersequent and $\Omega$ the set of its subformulas. Then $h$ is derivable iff there exists some $N$ and $h' \in S_N$ such that $h'  \qo h$ (right to left is trivial, the other direction is Lem.~\ref{lem:SN}). 
%By Lem.~\ref{lem-stabilises}, there exists $N$ s.t. $S_{N+1}=S_N$. 
Evidently we can compute~$S_{0}$. 
%Every hypersequent in~$S_{0}$ has exactly one component.
%Moreover, if no hypersequent in~$S_{i}$ contains a sequent with multiplicity $>\maxelesize{\C}$ then due to the definition of~$\WI(S_{i},\Omega,\C)$ this property holds for~$S_{i+1}$ too. 
Therefore by Lem.~\ref{lem:constr} we can compute a finite initial segment of the sequence $(S_i)$ until $S_{N+1}=S_{N}$ (Lem.~\ref{lem-stabilises}). Finally, decide (Lem.~\ref{lem:computabilityqo}) if there is some $h' \in S_{N}$ such that $h' \qo h$.
\end{proof}

\section{Hypersequent logics with contraction: decidability}\label{sec-FLec-decidability}

The following result was established by Ramanayake.
\begin{theorem}[\cite{Ram20LICS}]\label{thm-FLec-extensions}
Every analytic structural rule extension of $\HFLec$ is decidable.
\end{theorem}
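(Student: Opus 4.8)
The plan is to establish decidability by \emph{backward proof search} in the calculus $\C=\HFLec+R$, pruning the search tree with a well-quasi-ordering derived from contraction. This is the strategy alluded to in the Introduction. Since $\C$ has cut-admissibility and the subformula property (Theorem~\ref{thm-CiaGalTer08}), every hypersequent appearing in a backward search from a goal $h$ is an $\Omega$-hypersequent, where $\Omega$ is the finite set of subformulas of $h$. Because the analytic structural rules have linear conclusions and the subvariable property, only finitely many rule instances have a given hypersequent as conclusion, so the search tree is \emph{finitely branching}. It therefore remains to bound the height of each branch, after which König's Lemma finishes the argument.

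The structural-proof-theoretic heart of the plan is height-preserving admissibility of internal contraction (c) and external contraction (EC) in $\C$. Granting this, I would lift the everywhere-minimal-proof argument sketched in the Introduction's footnote from sequents to hypersequents. Define $g\qohyp h$ to hold iff $g$ is derivable from $h$ using only (c) and (EC). If a minimal-height proof contained a hypersequent $h$ strictly above a hypersequent $g$ on the same branch with $g\qohyp h$, then height-preserving admissibility of contraction would convert the subproof of $h$ into a proof of $g$ no taller than it, contradicting minimality of $g$'s subproof. Since every derivable hypersequent has an everywhere-minimal proof, it is safe for backward search to \emph{prune} any node that is $\qohyp$-above one of its ancestors. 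Along any surviving branch $g_0,g_1,\dots$ (read from the root upward) there is then no pair $i<j$ with $g_i\qohyp g_j$; that is, each branch is a \emph{bad sequence} for $\qohyp$.

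To conclude that branches are finite I would show that $\qohyp$ is a well-quasi-ordering on $\Omega$-hypersequents. Mirroring the encoding of Section~\ref{sec-FLew-decidability} (the $\#$-map into tuples of elements of $\pow{|\Omega|}$), a hypersequent becomes a finite family, indexed by succedent, of finite sets of multiplicity vectors. Because (c) and (EC) preserve the support of each component and never erase a component, $g\qohyp h$ corresponds to a componentwise comparison within matching support strata; this is a finite union of Dickson orderings and hence a wqo, and its relation to the minoring ordering $\min$ (the dual of the majoring ordering of Theorem~\ref{thm-majoring-wqo}) is what later supplies the quantitative bounds. As $\qohyp$ is a wqo, no bad sequence is infinite, so every surviving branch is finite; with finite branching, König's Lemma makes the pruned search tree finite and computable. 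Then $h$ is derivable iff the tree contains a closed proof, and decidability follows.

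The \textbf{main obstacle} is exactly this structural-proof-theoretic core: establishing height-preserving admissibility of (c) and (EC) uniformly for \emph{every} analytic structural rule extension, and correctly lifting the everywhere-minimal argument to the hypersequent setting, where external contraction interacts with the multiset-of-sequents structure in ways with no sequent analogue---this is where hypersequent decidability arguments become ``further complicated''. A secondary subtlety is that $\qohyp$ is genuinely different from the weakening case's $\qo$: contraction preserves supports and cannot delete components, so one cannot simply reuse the majoring correspondence of Lemma~\ref{lem:corr} and must instead verify the wqo property directly for the support-stratified ordering. Finally, note that the argument rests on König's Lemma rather than on an explicit saturation bound and is thus inherently non-constructive---consistent with the remark that Ramanayake's proof ``was not constructive enough to obtain complexity bounds'', which the later sections set out to repair.
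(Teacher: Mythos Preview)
Your approach---backward proof search with irredundancy pruning and a wqo termination argument---is the paper's strategy, and you correctly identify hp-admissibility of the contraction rules as the structural core (the paper simply cites \cite{Ram20LICS} for this). But there is a genuine gap in your definition of~$\qohyp$: you use only (c) and (EC), whereas the paper also includes (EW).

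This matters for the wqo step. With (EW)---which is hp-admissible because every rule schema carries the hypersequent-variable~$H$, so an extra side component can be threaded through any derivation at no cost in height---one obtains $g\qohyp h$ iff every component of~$h$ has a (c)-reduct among the components of~$g$; extra components of~$g$ are free. That is exactly the \emph{minoring} extension of the component order to multisets, and since the component order (same succedent, same antecedent support, pointwise-smaller multiplicities) is a finite disjoint union of Dickson orders, minoring lifts it to a wqo. Without (EW), your $\qohyp$ additionally forces a \emph{surjection} from the components of~$h$ onto those of~$g$: for instance $g=(p\Ra)\mid(p^{100}\Ra)$ and $h=(p^{2}\Ra)$ satisfy the paper's~$\qohyp$ (contract $p^{2}$ to $p$, then (EW) in $p^{100}\Ra$) but not yours. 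Your claim that this surjection order is ``a finite union of Dickson orderings'' conflates the order on individual \emph{sequents} (which is such a union) with the induced order on \emph{hypersequents} (which is not): lifting a wqo from elements to multisets is never automatic---it requires Higman, or minoring, or majoring---and each yields a different multiset order. You have not shown the surjection order is a wqo, and even if it were, the length-function theorems needed in Section~\ref{sec-upper-bound} are stated for~$\min$, not for your finer relation. The fix is simple: add (EW) to~$\qohyp$. You then prune \emph{more}, hp-admissibility of (EW) preserves completeness of the irredundant search, and the wqo reduces cleanly to minoring over your support-stratified component order---an encoding which, incidentally, is more careful about supports than the paper's own $\#$-map.
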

An argument by contradiction---summarised below---using the infinite Ramsey theorem (IRT) was used in that work to establish the finiteness of the backward proof search tree rooted at the input hypersequent.

The quasi-order~$\qohyp$ on hypersequents is defined\footnote{\cite{Ram20LICS} uses the equivalent partial ordering of this quasi-ordering. It is obtained by considering the quotient classes. The quasi-ordering is used here for uniformity with the approach in the previous section.} as $g\qohyp h$ iff $g$~can be obtained from~$h$ by repeated applications of~(c), (EC), and (EW).
A proof search tree has the \textit{irredundancy property} if whenever~$g$ and~$h$ appear on the same branch with~$g$ closer to the root, then~$g\not\qohyp h$. 
Completeness of the (finitely branching) irredundant proof search tree is established in~\cite[Theorem 5.1]{Ram20LICS}.
Now suppose that this tree is not finite. Then by K\"{o}nig's lemma it must contain an infinite branch. Hence there is an infinite sequence~$(h_{i})$ of hypersequents with $h_{i}\not\qohyp h_{j}$ for every $i<j$. The IRT was then used to obtain a sequence~$(g_{i})$ of hypersequents whose \textit{every component has the same succedent}, and $g_{i}\not\qohyp g_{j}$ for every $i<j$. This implies an infinite sequence~$(g_{i}^{\#})$ in~$\mathcal{P}_f(\mathbb{N}^{k})$ 
($k$ is the cardinality of the set of subformulas of formulas in the input hypersequent) such that $g_{i}\not\min g_{j}$ for every $i<j$, violating the well-ordering property of the minoring ordering~$\min$ defined below. This yields the desired contradiction.
\begin{definition}[minoring ordering] 
Let $X,Y\in P_f(\mathbb{N}^{k})$. Let~$\leq$ denote the usual componentwise ordering on $k$-tuples of natural numbers
and define the \emph{minoring ordering}
\[
\text{$X \min Y$ iff $\forall y\in Y\exists x\in X(x\leq y)$}
\]
\end{definition}
We are not aware of complexity bounds associated with the IRT.
To obtain complexity upper bounds (following section), we therefore refine the argument in~\cite{Ram20LICS} by using the following \textit{$d$-minoring ordering}~$\min^{d}$ in place of the IRT.
For $d>0$ and $(X_{1},\ldots,X_{d}), (Y_{1},\ldots,Y_{d})\in P_f(\mathbb{N}^{k})^{d}$, define
\begin{multline*}
\text{$(X_{1},\ldots,X_{d})\min^{d} (Y_{1},\ldots,Y_{d})$ iff}\\ \text{$X_{i}\min Y_{i}$ for every~$i$ ($1\leq i\leq d$)}
\end{multline*}
\begin{theorem}\label{thm-minoring-wqo}
Let $k,d>0$. Then\\
(i)~$(P_f(\mathbb{N}^{k}), \min)$ is a well-quasi-ordering.\\
(ii)~$(P_f(\mathbb{N}^{k})^{d}, \min^{d})$ is a well-quasi-ordering.
\end{theorem}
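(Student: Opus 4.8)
The plan is to follow the template already used for Theorem~\ref{thm-majoring-wqo}: reduce part~(ii) to part~(i) using the fact that a finite product of wqos is again a wqo (Section~2 of~\cite{ICALP}), and concentrate all the work in part~(i). Indeed, by definition of $\min^{d}$, the poset $(\pow{k}^{d},\min^{d})$ is precisely the $d$-fold product of $(\pow{k},\min)$ with itself, so once part~(i) is established part~(ii) is immediate. The essential difference from the majoring case is that part~(i) is genuinely harder: preservation of wqo under the minoring (Smyth) power-set construction is \emph{not} automatic for an arbitrary wqo base, so the argument must exploit the special structure of $\nn^{k}$.

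For part~(i) I would first record the reformulation
\[
X\min Y \iff \uparrow Y \subseteq \uparrow X,
\]
where $\uparrow(\cdot)$ denotes upward closure in $\nn^{k}$; the equivalence is immediate from the definitions. Passing to complements, $X\min Y$ holds iff $D_X\subseteq D_Y$, where $D_X := \nn^{k}\setminus\uparrow X$ is downward closed. The next step is to describe downward-closed subsets of $\nn^{k}$ concretely: each is a finite union of ideals, and each ideal is a product of initial segments of $\nn$, hence of the form $\downarrow c$ for a corner $c\in(\nn\cup\{\infty\})^{k}$. A downward-closed set $D$ is therefore coded by the finite set of its maximal corners, and inclusion of downward-closed sets translates \emph{exactly} into the majoring order $\maj$ on $\powf{(\nn\cup\{\infty\})^{k}}$. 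The map $X\mapsto\text{(maximal corners of }D_X)$ is thus order-preserving and order-reflecting from $(\pow{k},\min)$ into $(\powf{(\nn\cup\{\infty\})^{k}},\maj)$. Since $\nn\cup\{\infty\}$ is a wqo (a well-order of type $\omega+1$) and finite products of wqos are wqos, $(\nn\cup\{\infty\})^{k}$ is a wqo; the majoring power-set of a wqo is a wqo (the version of Theorem~\ref{thm-majoring-wqo}(i) over an arbitrary wqo base, as established in~\cite{Linearise}). As wqo transfers backwards along an order-reflecting map, this gives that $(\pow{k},\min)$ is a wqo.

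I expect part~(i) to be the main obstacle. The crux is that the minoring ordering really does preserve wqo here, and this is exactly where the reduction to the majoring order over the extended base $(\nn\cup\{\infty\})^{k}$ does the work, converting an unfamiliar ordering into the one already treated in Theorem~\ref{thm-majoring-wqo}. The routine but careful part is the coding of downward-closed sets: one must verify that every downward-closed subset of $\nn^{k}$ is a finite union of ideals, that a downward-closed set is determined by its finite set of maximal corners, and that inclusion matches $\maj$ on these finite corner-sets. An alternative, if one prefers to avoid the reduction, is to invoke the wqo property of the minoring ordering directly from the bad-sequence analysis of~\cite{Bal20}, exactly as Theorem~\ref{thm-majoring-wqo}(i) is cited; part~(ii) would then follow from the product lemma as above.
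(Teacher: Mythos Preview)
Your reduction for part~(ii) is exactly what the paper does: it invokes the product lemma from~\cite{ICALP} just as in the proof of Theorem~\ref{thm-majoring-wqo}(ii). For part~(i), however, the paper simply defers to~\cite{Ram20LICS} with no further argument, whereas you supply an explicit proof via the ideal-decomposition/dualisation technique: complement $\uparrow X$ to a downward-closed set, represent it by its finite set of maximal corners in $(\nn\cup\{\infty\})^{k}$, and observe that inclusion of downward-closed sets becomes $\maj$ on corner sets, thereby reducing the minoring wqo to the majoring wqo over a slightly larger (but still wqo) base. This argument is correct and standard in the wqo literature; it has the advantage of being self-contained and of making transparent why the Smyth/minoring construction preserves wqo here despite failing over arbitrary wqo bases. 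Your alternative of citing~\cite{Bal20} directly would also work, and is closer in spirit to the paper's citation-only treatment. Either way, nothing is missing.
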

\begin{proof}
(i) See~\cite{Ram20LICS}. (ii) Similar proof as Thm.~\ref{thm-majoring-wqo}(ii). 
\end{proof}
\begin{proof}[Argument for Thm.~\ref{thm-FLec-extensions} without IRT]
Let~$\HFLec  + R$ be an arbitrary analytic structural rule extension of~$\HFLec$.
%A derivation is irredundant if no hypersequent in the derivation can be obtained from a hypersquent
%above it (i.e. further away from the root) by repeated applications from (c), (EC), (EW).
As argued in the proof of~\cite[Theorem 5.1]{Ram20LICS}: $h$ is derivable in~$\HFLec  + R$
iff $h$ has an irredundant derivation in~$\HFLec  + R$. 
Construct a proof search tree of~$h$ as follows:
Place~$h$ at the root. Repeatedly, for each hypersequent in
the tree, place all the premises of all possible rule instances
as its children, omitting those rule instances that would introduce
a premise that violates irredundancy of the tree. The
proof search tree is finitely branching since there are only
finitely many possible rule instances that apply to a given
hypersequent conclusion. If the length of all the branches in the tree are not bounded by some value under this construction, then there exists a sequence~$(h_{i})$ of hypersequents built from the subformulas~$F_{1},\ldots,F_{d}$ of~$h$ such that for every~$i<j$: $h_{i}\not\qohyp h_{j}$.
It follows that~$(h_i^\#)$---the $\#$ function is defined above Eg.~\ref{eg-hash}---is a sequence in $\powf{\nn^d}^{(d+1)}$ such that for every~$i,j\in\mathbb{N}$: $h_i^\# \not \min^{d+1} h_j^\#$. This contradicts that~$\min^{d+1}$ is a wqo.

Hence the proof search tree is finite (K\"{o}nig's lemma) so the construction
terminates. The proof search tree contains a derivation as a subtree iff $h$~is derivable.
\end{proof}

\section{Complexity upper bounds}\label{sec-upper-bound}

Given a well-quasi-order~$\preccurlyeq_{A}$ on the set~$A$, a sequence $a_{0},\ldots,a_{N}$ ($a_{i}\in A$) is called a \textit{bad sequence} if for every $i<j$ it is the case that $a_{i}\not\preccurlyeq_A a_{j}$.

The main ingredient for obtaining a complexity upper bound on the decision procedures in the above sections is a bound on the maximum length of `eligible' bad sequences under the majoring and minoring orderings. Why not a bound on \textit{all} bad sequences? Because a maximum length does not exist in general: Consider the 
 usual componentwise ordering~$\le$ on $\mathbb{N}^2$.
It is well known (e.g. Dickson's lemma) that~$\le$ is a well-quasi ordering. Clearly $(1,1),(0,n),(0,n-1),\ldots,(0,0)$ is a bad sequence of length~$n+2$ for any~$n$ and hence there can be no maximum length for the bad sequences under~$\le$. Evidently, the reason is the arbitrarily large ``jump" from $(1,1)$ to $(0,n)$. 

However, if the sequence is generated by some process (e.g. proof search) we might be able to identify some bound on the magnitude of the jumps, and for such bad sequences a maximum length may exist. Figueira \textit{et al.}~\cite{LICS} and Schmitz \textit{et al.}~\cite{ICALP} identify sufficient conditions for the latter to hold: bad sequences whose sequential growth in size (defined using some norm~$|\cdot|_{A}$) is controlled by a monotone function~$g$ and starting value~$n$, i.e. bad sequences $a_0,a_1,\dots,a_N$ such that $|a_{0}|_{A}\leq n$, $|a_{1}|_{A}\leq g(n)$, $|a_{2}|_{A}\leq g(g(n))$ and so on.

%Therefore, in order for the maximum to be defined **cite with name\marginpar{cite}** restrict attention to those bad sequences 
\begin{definition}[\cite{ICALP,LICS}]
A \emph{normed wqo} is a wqo~$\preccurlyeq_{A}$ on a set~$A$ and a norm~$|\cdot|_{A}:A\mapsto\mathbb{N}$ that is \emph{proper} in the sense that $\{a :  |a|_{A} \leq n\}$ is finite for every~$n\in\mathbb{N}$.
\end{definition}

\begin{definition}[\cite{ICALP,LICS}]
A \emph{control function} is any function $g : \nn \to \nn$ that is monotone and $g(x) \ge x$ for all $x \in \nn$. Let $g$ be any control function and let $n \in \nn$. A sequence $a_{0},a_{1},\dots$ over elements
of~$A$ is called a \emph{$(g,n)$-controlled bad sequence} over 
the $|\cdot|_{A}$ normed wqo $\preccurlyeq_{A}$ iff
\begin{itemize}
	\item There is no $i < j$ such that $a_i \preccurlyeq_A a_j$, and
	\item $|a_{i}|_{A} \le g^i(n)$, where $g^i(n)$ denotes $i$-fold composition of $g$ with itself.
\end{itemize}
\end{definition}

\begin{lemma}[\cite{ICALP,LICS}]
Let~$(A,\preccurlyeq_{A},|\cdot|_{A})$ be a normed wqo, $g$ a control function and $n \in \nn$. There is a $(g,n)$-controlled bad sequence of finite maximum length.
%(denoted as the function $L_{A,\preccurlyeq,g}(n)$).
\end{lemma}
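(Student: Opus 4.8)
The plan is to organize all $(g,n)$-controlled bad sequences into a tree and finish by König's lemma. Define $T$ to be the tree whose nodes are the finite $(g,n)$-controlled bad sequences (with the empty sequence as the root), ordered by the prefix relation: the children of a node $a_{0},\dots,a_{i}$ are exactly those sequences $a_{0},\dots,a_{i},a_{i+1}$ that are again $(g,n)$-controlled bad sequences. Every branch of $T$ then corresponds to a $(g,n)$-controlled bad sequence and conversely, so the assertion ``there is a controlled bad sequence of finite maximum length'' is equivalent to ``$T$ is finite''.

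First I would show that $T$ is finitely branching. Fix a node $a_{0},\dots,a_{i}$. Any child appends an element $a_{i+1}$ which, by the controlledness condition, must satisfy $|a_{i+1}|_{A} \le g^{i+1}(n)$. Since the norm $|\cdot|_{A}$ is proper, the set $\{a \in A : |a|_{A} \le g^{i+1}(n)\}$ is finite, so there are only finitely many candidates for $a_{i+1}$; hence each node has finitely many children. (The root is likewise finitely branching, its children being the singletons $a_{0}$ with $|a_{0}|_{A}\le n$.)

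The heart of the argument is then a contradiction via König's lemma. Suppose $T$ were infinite. Being an infinite, finitely branching tree, König's lemma yields an infinite branch $a_{0},a_{1},a_{2},\dots$, which by construction is an \emph{infinite} $(g,n)$-controlled bad sequence; in particular there is no $i<j$ with $a_{i}\preccurlyeq_{A}a_{j}$. But $\preccurlyeq_{A}$ is a well-quasi-ordering, so every infinite sequence over $A$ admits some $i<j$ with $a_{i}\preccurlyeq_{A}a_{j}$, a contradiction. Therefore $T$ is finite, and the length of its longest branch is the desired finite maximum length.

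The main obstacle is the finite-branching step, since this is precisely where both hypotheses defining a normed wqo are used in tandem: the control function $g$ caps the norm of the next element, and properness converts that norm bound into a finiteness bound on the set of candidates. Everything else—the tree reformulation and the König's lemma contradiction against the wqo property—is routine once finite branching is established.
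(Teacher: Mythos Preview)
Your proof is correct and essentially identical to the paper's: both build the prefix tree of $(g,n)$-controlled bad sequences, use properness of the norm together with the control bound to get finite branching, and invoke K\"onig's lemma against the wqo property to conclude the tree is finite. The paper phrases the final step as ``no infinite branch (by wqo) and finitely branching, hence finite by K\"onig's lemma'' rather than your contrapositive formulation, but this is the same argument.
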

\begin{proof}
Consider the tree whose nodes are $(g,n)$-controlled bad sequences such that the root is the empty sequence, 
and the set of children of a node~$\underline{x}$ are those $(g,n)$-controlled bad sequences of the form $\underline{x},a$ ($a\in A$).
Since $|a|_{A}\leq g^{|\underline{x}|}(n)$ and $|\cdot|_{A}$ is a proper norm, it follows that the tree is finitely branching. The tree has no infinite branch because~$\preccurlyeq_{A}$ is a wqo. Hence by K\"{o}nig's lemma it is finite and so there is a branch (and hence a $(g,n)$-controlled bad sequence) of maximum length.
\end{proof}
%In the case of decidability for extensions of $\HFLelw$ and~$\HFLec$, the underlying reason why bad sequences can be controlled is that for any fixed calculus, there is a polynomial~$\pi$ such that in any rule instance, the size of each premise is $\pi$-bounded in the size of the conclusion, and \textit{vice versa}.

For $n\in\mathbb{N}$ and control function~$g$, let $L_{A,\preccurlyeq_A,g}(n)$ (\textit{length function}) be the length of the longest $(g,n)$-controlled bad sequence over the normed wqo $(A,\preccurlyeq_{A},|\cdot|_{A})$.

\subsection*{Norm over finite powersets}

%Since we use wqos to prove that our decision procedures terminate, it would be desirable to 
%use the above idea in our context to provide upper bounds on the running time of these procedures as well.
%For that, we need to first introduce a proper norm on~$P_f(\mathbb{N}^d)^{k}$, which we will do now.
Define these norms over $\nn^k$, $\pow{k}$, and $(P_f(\mathbb{N}^k))^{d}$:
\begin{itemize}
\item $\|\cdot\|$ of $x = (x_1,\dots,x_k) \in \nn^k$ as the maximum of $x_1,\dots,x_k$
\item $\|\cdot\|$ of $X\in\pow{k}$ as $\max\left(\{|X|\}\cup \{\|x\| \,|\, x \in X\}\right)$. Here $|X|$ is
the cardinality of the set $X$
\item $\|\cdot\|$ of $(X^1,\dots,X^d)\in P_f(\mathbb{N}^k)^{d}$ as the maximum of $\|X^{1}\|,\ldots,\|X^{d}\|$
\end{itemize}
 It is easy to check that the latter norm is proper and hence that $(P_f(\mathbb{N}^k)^{d}, \min^{d}, \|\cdot\|)$ and $(P_f(\mathbb{N}^k)^{d}, \maj^{d}, \|\cdot\|)$ are normed wqos.
We are ready to talk about length functions of these wqos and their asymptotic computational complexity. For this we 
will need the \emph{fast-growing function hierarchy} and the \emph{fast-growing complexity classes}.

\subsection*{Fast-growing complexity classes}

A large collection of problems in verification, automata theory, formal languages and logic have running times that grow much faster than any \emph{elementary function} and indeed any primitive recursive function.
To compare computational problems that fall into this category, a notion of \emph{fast-growing complexity classes} based on \emph{ordinals} is used. We will not give a complete definition but only state those definitions and facts that are essential to obtain our upper bound on the running time. See~\cite{BeyondElem} for details and a survey of these classes.

First a hierarchy of fast-growing functions $\{F_\alpha\}_{\alpha < \epsilon_0}$ is defined, where for each ordinal $\alpha < \epsilon_0$, we have a function $F_{\alpha} : \nn \to \nn$. 
Using these functions, the extended Grzegorczyk hierarchy $\{\mathscr{F}_\alpha\}_{\alpha < \epsilon_0}$
is defined, where for each $\alpha < \epsilon_0$, we have a collection
of functions $\mathscr{F}_\alpha$. 
Let $\mathbf{F}^*_{\alpha}$ denote the class $\bigcup_{\beta < \alpha}  \bigcup_{p \in \mathscr{F}_\beta} \{F_{\alpha}(p(n))\}$ of functions of~$n$.
All that we require for our purposes are the following two facts.
\begin{lemma}[Lemma~4.6 of \cite{BeyondElem}]~\label{lem:closed-pr-functions}
	If $f : \mathbb{N} \to \mathbb{N}$ is a function in $\mathbf{F}^*_{\omega^{\omega}}$ 
	and $g_1, g_2$ are primitive recursive functions then
	the function $g_1 \circ f \circ g_2$ is also in $\mathbf{F}^*_{\omega^{\omega}}$.
\end{lemma}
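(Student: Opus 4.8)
The plan is to unfold the definition of $\mathbf{F}^*_{\omega^{\omega}}$ and reduce the closure claim to two standard properties of the extended Grzegorczyk hierarchy: closure of each level under composition, and an absorption property allowing a slow function applied \emph{after} $F_{\omega^{\omega}}$ to be swallowed by a bounded change of argument. By definition $f\in\mathbf{F}^*_{\omega^{\omega}}$ means $f(n)=F_{\omega^{\omega}}(p(n))$ for some $p\in\mathscr{F}_{\beta}$ with $\beta<\omega^{\omega}$; writing $\mathscr{F}_{<\omega^{\omega}}=\bigcup_{\gamma<\omega^{\omega}}\mathscr{F}_{\gamma}$, the goal is to exhibit $q\in\mathscr{F}_{<\omega^{\omega}}$ with $g_1\circ f\circ g_2\le F_{\omega^{\omega}}\circ q$ pointwise; membership in $\mathbf{F}^*_{\omega^{\omega}}$ then follows since, as a complexity class, it is used up to domination. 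I would first record two facts: each $\mathscr{F}_{\gamma}$ (and hence the union $\mathscr{F}_{<\omega^{\omega}}$, since the $\max$ of two ordinals below $\omega^{\omega}$ is again below $\omega^{\omega}$) is closed under composition; and the primitive recursive functions coincide with $\mathscr{F}_{<\omega}\subseteq\mathscr{F}_{<\omega^{\omega}}$, so in particular $g_1,g_2\in\mathscr{F}_{<\omega^{\omega}}$.

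The inner composition is immediate. Since $f\circ g_2$ evaluates to $F_{\omega^{\omega}}(p(g_2(n)))$ and $p\circ g_2\in\mathscr{F}_{<\omega^{\omega}}$ by composition closure, we get $f\circ g_2\in\mathbf{F}^*_{\omega^{\omega}}$ outright. This reduces the problem to post-composition with $g_1$.

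The outer composition is the crux, and it is here that additive principality of $\omega^{\omega}$ is essential. Pick $\gamma<\omega$ with $g_1\in\mathscr{F}_{\gamma}$; then $g_1(y)\le F_{\gamma}^{(c)}(y)$ for some constant $c$, since each Grzegorczyk level is dominated by finitely many iterates of its associated $F$-function. The key absorption inequality is that for every fixed $\gamma<\omega^{\omega}$ there is a constant $d$ with $F_{\gamma}(F_{\omega^{\omega}}(x))\le F_{\omega^{\omega}}(x+d)$ for all large $x$. I would derive this from the fundamental-sequence unfolding $F_{\omega^{\omega}}(x)=F_{\omega^{x}}(x)$ together with the standard nesting bounds along a fundamental sequence: for $x$ exceeding the finite data of $\gamma$ one has $\gamma<\omega^{x}$, whence $F_{\gamma}(F_{\omega^{x}}(x))\le F_{\omega^{x}}^{(2)}(x)\le F_{\omega^{x}+1}(x)\le F_{\omega^{x+1}}(x+1)=F_{\omega^{\omega}}(x+1)$. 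Iterating this $c$ times absorbs $F_{\gamma}^{(c)}$ into a bounded shift $x\mapsto x+d'$, giving $g_1(F_{\omega^{\omega}}(m))\le F_{\omega^{\omega}}(m+d')$. Combining with the inner step yields $g_1(f(g_2(n)))\le F_{\omega^{\omega}}\big((p\circ g_2)(n)+d'\big)$, and $q:=\big(\,\cdot+d'\,\big)\circ p\circ g_2$ lies in $\mathscr{F}_{<\omega^{\omega}}$, as required.

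The main obstacle is precisely this absorption inequality: it fails for ordinals that are not additively principal, so the argument genuinely depends on $\omega^{\omega}=\omega^{(\omega)}$ being closed under addition (which keeps the intermediate ordinals of the form $\omega^{x}+1\le\omega^{x+1}$). Carefully tracking the finite threshold on $x$ and the dependence of $d$ on $\gamma$ through the fundamental-sequence nesting lemmas is the technical heart; everything else is bookkeeping with the composition-closure of $\mathscr{F}_{<\omega^{\omega}}$.
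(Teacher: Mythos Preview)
The paper does not give its own proof of this lemma: it is stated as Lemma~4.6 of~\cite{BeyondElem} and cited without argument, so there is nothing in the paper to compare your proposal against. Your sketch is a reasonable reconstruction of the standard proof of this closure property, and the identification of additive principality of~$\omega^{\omega}$ as the structural reason the outer absorption step works is exactly right.

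One point worth tightening: the paper's $\mathbf{F}^*_{\omega^{\omega}}$ is literally the set of functions of the form $n\mapsto F_{\omega^{\omega}}(p(n))$, so your argument delivers only an \emph{upper bound} $g_1\circ f\circ g_2\le F_{\omega^{\omega}}\circ q$, not exact membership. You acknowledge this with the phrase ``used up to domination'', but that is a slight mismatch with the lemma as stated. In Schmitz's original the statement is indeed about domination (equivalently, about the induced complexity class $\mathbf{F}_{\omega^{\omega}}$), and that is also all the present paper ever uses, so the gap is purely cosmetic; still, you should flag it explicitly rather than leave it implicit.
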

%primitive recursive and Ackermannian functions and a single application of a hyper-Ackermannian function.\marginpar{What is a hyper-Ackermannian function? intuitive explanation?}
%The following describes the maximum length of $(g,n)$-controlled bad sequences for $d$-majoring and $d$-minoring orderings.
\begin{theorem}[\cite{Bal20}]~\label{thm:length-functions}
	Let $g$ be any fixed primitive recursive function.
	For any fixed $d$ and $k$, the functions which map $n$ to
	$L_{\pow{d}^k,\maj^k,g}(n)$ and $L_{\pow{d}^k,\min^k,g}(n)$
	are upper-bounded by functions in the class
	$\mathbf{F}^*_{\omega^{d}}$. When $d$ and $k$ are not fixed,
	but are arguments along with the number $n$, then the functions
	$L_{\pow{d}^k,\maj^k,g}(n)$ and $L_{\pow{d}^k,\min^k,g}(n)$ are
	upper-bounded by functions in the class $\mathbf{F}^*_{\omega^{\omega}}$. 
\end{theorem}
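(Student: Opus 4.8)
The plan is to run the controlled-bad-sequence machinery of \cite{LICS,ICALP,BeyondElem}. The key transfer principle is that length-function bounds propagate along \emph{controlled order-reflections}: a map $\phi$ with $\phi(x)\preccurlyeq\phi(y)\Rightarrow x\preccurlyeq y$ and $\|\phi(x)\|\le p(\|x\|)$ for some primitive recursive $p$. Such a $\phi$ sends a $(g,n)$-controlled bad sequence to a bad sequence controlled by a primitive recursive adjustment of $g$ and $n$, and by Lemma~\ref{lem:closed-pr-functions} these adjustments keep us inside $\mathbf{F}^*_{\omega^\omega}$ (and, for fixed parameters, inside the relevant $\mathbf{F}^*_{\omega^{d}}$). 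So it suffices to (i)~collapse the $k$-fold product into a single finite powerset, and (ii)~bound the length function of $(\pow{d},\maj)$ and $(\pow{d},\min)$ individually.

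For (i) I would use the tagging map
\[
\Phi:(X^1,\dots,X^k)\ \longmapsto\ \bigcup_{i=1}^{k}\{(x,i,k-i)\,:\,x\in X^i\}\in\pow{d+2}.
\]
Since $(x,i,k-i)\le(y,j,k-j)$ forces $i=j$, one checks $\Phi(X)\maj\Phi(Y)\Rightarrow X\maj^k Y$ and likewise for $\min$, while $\|\Phi(X)\|$ is bounded by a polynomial in $\|X\|$ and $k$; thus $\Phi$ is a controlled order-reflection reducing everything to a single powerset over $\mathbb{N}^{d+2}$. The two extra coordinates are harmless in the regime that drives the application (when $d$ is unbounded, $\sup_d\omega^{d+2}=\omega^\omega$), and the sharper exponent $\omega^{d}$ for fixed $d,k$ is recovered by instead bounding the product length function componentwise, which costs only a primitive recursive factor.

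For (ii), the majoring case is clean. Listing a finite set $X\subseteq\mathbb{N}^d$ as a sorted sequence $\mathrm{seq}(X)$ maps into the Higman wqo $(\mathrm{Seq}(\mathbb{N}^d),\sqsubseteq)$ of finite sequences of vectors under subword-embedding-with-domination. If $\mathrm{seq}(X)\sqsubseteq\mathrm{seq}(Y)$ then every element of $X$ is matched to a dominating element of $Y$, i.e. $X\maj Y$; hence $\mathrm{seq}$ is a controlled order-reflection of $(\pow{d},\maj)$ into $(\mathrm{Seq}(\mathbb{N}^d),\sqsubseteq)$ (the norm is even preserved). The known multiply-recursive length-function bounds for Higman's lemma over $\mathbb{N}^d$—in $\mathbf{F}^*_{\omega^{d}}$ for fixed $d$ and $\mathbf{F}^*_{\omega^\omega}$ when $d$ is part of the input—then give the claimed bound, with the fixed/variable dichotomy of the theorem mirroring exactly that of the Higman bound.

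The minoring case is where I expect the real difficulty, and it is the main obstacle. The same listing does \emph{not} reflect $\min$ into Higman: one can have $\mathrm{seq}(X)\sqsubseteq\mathrm{seq}(Y)$ yet $X\not\min Y$, and the ``correct'' matching $\forall y\in Y\,\exists x\in X.\,x\le y$ would require a subword embedding with respect to the \emph{reverse} order on $\mathbb{N}^d$, which is not well-founded, so that route is blocked. To get around this I would exploit the Smyth reformulation $X\min Y\iff{\uparrow}X\supseteq{\uparrow}Y$, representing each set by its finite antichain of minimal elements (equivalently, the finitely generated upward-closed set), and run a direct controlled descent on these generators—or reflect into a Higman wqo over the ideal alphabet $(\mathbb{N}\cup\{\omega\})^d$, whose extra top in each coordinate leaves the $\omega^{d}$ index unchanged. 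Making such a descent \emph{controlled}, so that the bound again lands in $\mathbf{F}^*_{\omega^{d}}$ and $\mathbf{F}^*_{\omega^\omega}$, is the delicate technical heart of the argument, and is precisely the content supplied by \cite{Bal20}.
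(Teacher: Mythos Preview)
The paper does not prove this theorem: it is quoted verbatim from \cite{Bal20} and used as a black box (note the citation in the theorem header and the absence of any proof environment following it). There is therefore nothing in the paper to compare your sketch against.

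As to the sketch itself: your reduction of the $k$-fold product to a single powerset via the tagging map, and your treatment of the majoring case by order-reflecting into Higman's subword wqo on $\mathrm{Seq}(\mathbb{N}^d)$, are both sound and standard. But your minoring argument is not a proof: you correctly identify that the naive listing does not reflect $\min$ into Higman, you gesture at the Smyth/upward-closure reformulation, and then you explicitly say that making the descent controlled ``is precisely the content supplied by \cite{Bal20}''. That is circular---you are invoking the very reference the theorem is attributed to in order to fill the gap you have identified. If the goal is an independent proof, the minoring case remains open in your write-up; if the goal is to match the paper, then simply citing \cite{Bal20} (as the paper does) is all that is required.
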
 
Define $\mathbf{F}_{\alpha}$ to be the set of decision problems that can be decided by a deterministic Turing machine in time $F_{\alpha}(p(n))$ where $n$ is the size of the input and $p$ is some function belonging to any of the ``lower'' classes $\bigcup_{\beta < \alpha}  \mathscr{F}_{\beta}$. I.e.
$$\mathbf{F}_{\alpha} = \bigcup_{\beta < \alpha } \ \bigcup_{p \in \mathscr{F}_{\beta}} \text{TIME} (F_{\alpha}(p(n)))$$
The distinction between deterministic and non-deterministic and between time and space bounds is irrelevant for 
$\mathbf{F}_{\alpha}$ with $\alpha>2$ because the class is closed under exponential functions.
Of primary interest to us are the two classes $\mathbf{F}_{\omega}$ and $\mathbf{F}_{\omega^{\omega}}$.

Informally speaking, $\mathbf{F}_{\omega}$ consists of those decision problems whose running time 
can be obtained by composing primitive recursive functions and a \emph{single application} of an Ackermannian function.
The decidability problem for~$\FLec$ is $\mathbf{F}_{\omega}$-complete~\cite{Urq99}.
Meanwhile~$\mathbf{F}_{\omega^\omega}$ consists of those decision problems whose running time can be obtained by composing
\emph{multiply recursive functions} and a single application of a \emph{hyper-Ackermannian function}.
Roughly speaking, multiply recursive functions and hyper-Ackermannian functions are higher-ordinal analogues of primitive recursive functions and Ackermannian functions respectively.
%Roughly speaking, multiply recursive functions and hyper-Ackermannian functions are obtained by composing Ackermannian functions with themselves a ``large'' number of times, similar to how Ackermannian functions are obtained by composing primitive recursive functions with themselves a ``large'' number of times.

\subsection{Hypersequent substructural logics with weakening}

Let us relate the running time of the decision procedure of $\HFLelw+R$ to the maximum length of controlled
bad sequences over the majoring ordering. 
%With this relationship and the previous theorem, we will obtain the upper bounds on the running time.

Let~$\C=\HFLelw + R$ be an analytic structural rule extension and~$h$ the arbitrary input hypersequent.
Also let~$\Omega$ be the set of subformulas of $h$. Set $d:= |\Omega|$ and $n:= \maxelesize{h}$. 
%Recall that the $\#$ function is defined above Eg.~\ref{eg-hash}.

As described in Thm.~\ref{thm:main-theorem}, we first compute the sets $S_{0},S_{1},\ldots$ until we encounter the first index $N$ such
that $S_{N+1} = S_N$. Then we check if there exists $h' \in S_N$ such that $h'\qo h$. 
Notice that given the set $S_i$, we can compute $S_{i+1}$ in exponential time.
Also, notice that checking if there exists $h' \in S_N$ such that $h' \qo h$ takes at most exponential
time in the size of $S_N$. Hence, the running time of the algorithm is a primitive recursive
function of $\sum_{i=0}^{N} |S_i|$ and $n$.

We now show that the the size of $\sum_{i=0}^N |S_i|$ can be upper-bounded by a primitive recursive function of $N$ and $n$. 
Then we will show that $N$ can be upper-bounded by a function of $n$ for some function in $\mathbf{F}^*_{\omega^{\omega}}$.
Hence, by using Lem.~\ref{lem:closed-pr-functions}, we can then conclude that the running time
of the algorithm is upper-bounded by a function in $\mathbf{F}^*_{\omega^{\omega}}$ and 
so the problem is in $\mathbf{F}_{\omega^{\omega}}$.

Suppose that $h' \in \WI(S,\Omega,\C)$. From the definition of $S$-{\thin}
it follows that there is a primitive recursive function~$f'$ such that $\maxelesize{h'} \leq f'(\maxelesize{S} \mply n)$. Specifically, $h'$ contains at most $(\maxelesize{S}.\maxelesize{C}.(|\Omega| + 1))^{|\Omega| + 1}$ different sequents, each with multiplicity at most $\maxelesize{C}$.
Multiply these values with the maximum number of symbols in a component $kn(\maxelesize{S}.\maxelesize{C}.|(\Omega| + 1))$ (constant $k$ accounts for the structural symbols: comma, $\Ra$, $|$) to bound $\maxelesize{h'}$.
By a log-transformation and using $|\Omega|\leq n$, we obtain $\maxelesize{h'}\le f'(\maxelesize{S} \mply n)$ where $f'(x) = 2^{x^{c'}}$ (for some fixed $c' \ge 1$ which does not depend on~$n$). Define $g(x) = f'(x^2)$. 

Clearly there is a polynomial function $f(x) = x^c$ for some fixed $c > 1$ which does not depend on $n$ such that $\maxelesize{S_0} \le f(n)$. After all, every hypersequent in~$S_0$ contains at most two formulas, each with size $\leq n$. 
Assuming $\maxelesize{S_i} \le g^i(f(n))$:
\begin{align*}
\maxelesize{S_{i+1}} 	&\le f'(\maxelesize{S_i}\mply n) & \text{above, def. of $S_{i+1}$}	\\
 					&\le f'(g^i(f(n)) \mply n)		& \maxelesize{S_i} \le g^i(f(n))	\\
					&< f'(g^i(f(n)) \mply g^i(f(n)))	& n<g^i(f(n))\\
					&=g(g^i(f(n))) = g^{i+1}(f(n))	&
\end{align*}
So by induction $\maxelesize{S_i} \le g^i(f(n))$ holds for all $i$.	
% $\maxelesize{S_{i+1}} 
%\le f'(g^i(f(n)) \mply n) < f'(g^i(f(n)) \mply g^i(f(n))) = g(g^i(f(n))) = g^{i+1}(f(n))$.
%if $h' \in S_i$ then $\|h'^{\#}\| \le g^i(f(n))$ where $f(n)$~is the polynomial bounding~$\maxelesize{S_{0}}$, and $g$ is an exponential function bounding the size of the largest hypersequent in the set function~$\WI(S,\Omega,\C)$ in terms of~$\maxelesize{S}$ and $n$. Indeed, an explicit definition for the latter function is easily obtained from the definition of $S$-{\thin}. 
It follows that $\sum_{i=0}^N |S_i| \le |\{h' \,|\, \maxelesize{h'} \le g^N(f(n)) \}|$.
The number of different symbols that may appear in a hypersequent in~$S_{i}$ is bounded by some polynomial~$p(n)$.
Thus $|\{h' \,|\, \maxelesize{h'} \le g^N(f(n)) \}|\leq (g^N(f(n)))^{p(n)}$. Since $g$ and $f$ are primitive recursive, it follows that $\sum_{i=0}^N |S_i|$ can be upper-bounded by a primitive recursive function of 
$N$ and $n$. 

Now let us proceed to bound $N$ in terms of $n$. Let $h_0$ be any element in the set $S_0$.
Since each $S_i \subset S_{i+1}$,
it follows that for all $1 \le i \le N$, we can find a $h_i \in S_i \setminus S_{i-1}$.
Recall that the $\#$ function is defined above Eg.~\ref{eg-hash}.  
Since norm-size $\leq$ number of symbols i.e. $\|h_i^{\#}\|\le\maxelesize{h_i}$, it follows from the above calculation that $\|h_i^{\#}\| \le g^i(f(n))$ for each $i$. Thus $h^{\#}_{0},h^{\#}_1,\ldots,h^{\#}_{N}$ is a $(g,f(n))$-controlled bad sequence of length $N$ over~$\maj^{d+1}$. 
Hence, the maximum value for~$N$ is the length of the longest $(g,f(n))$-controlled bad sequence.
By Thm.~\ref{thm:length-functions} $N$ is upper-bounded---for fixed $d$---by a function in the class $\mathbf{F}^*_{\omega^{d}}$. In general, both $d$ and $n$ will vary with the input; then by Thm.~\ref{thm:length-functions}, we have that~$N$ is upper-bounded by a function in the class $\mathbf{F}^*_{\omega^{\omega}}$.

All the other algorithmic operations are primitive recursive so by Lem.~\ref{lem:closed-pr-functions} the running time is upper-bounded by a function in $\mathbf{F}^*_{\omega^{\omega}}$. We have therefore established the following:

\begin{theorem}
The decision problem for every analytic structural rule extension~$\HFLelw + R$ is in $\mathbf{F}_{\omega^{\omega}}$.
\end{theorem}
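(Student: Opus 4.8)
The plan is to analyse the decision procedure of Thm.~\ref{thm:main-theorem} and bound its running time. On input $h$ we set $\Omega$ to be the set of subformulas of $h$, $d:=|\Omega|$ and $n:=\maxelesize{h}$, compute $S_0,S_1,\ldots$ until the first index $N$ with $S_{N+1}=S_N$, and then test whether $h'\qo h$ for some $h'\in S_N$. By Lem.~\ref{lem:constr} each $S_{i+1}$ is computable from $S_i$ in time exponential in $|S_i|$ and the ambient sizes, and the final $\qo$-test is likewise exponential (Lem.~\ref{lem:computabilityqo}); hence the total running time is a primitive recursive function of $\sum_{i=0}^{N}|S_i|$ and $n$. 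By Lem.~\ref{lem:closed-pr-functions} it therefore suffices to bound both $\sum_{i=0}^N|S_i|$ and $N$ by functions of $n$ lying in $\mathbf{F}^*_{\omega^{\omega}}$.

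First I would control the sizes. From the definition of $S$-{\thin} one extracts a primitive recursive $f'$ with $\maxelesize{h'}\le f'(\maxelesize{S}\mply n)$ for every $h'\in\WI(S,\Omega,\C)$. Setting $g(x):=f'(x^2)$ turns this into a control function independent of $n$: using $n<g^i(f(n))$ one obtains $\maxelesize{S_{i+1}}\le f'(\maxelesize{S_i}\mply n)\le g(\maxelesize{S_i})$ along the sequence, and hence by induction $\maxelesize{S_i}\le g^i(f(n))$, where the polynomial $f(n)$ bounds $\maxelesize{S_0}$. Since the number of distinct symbols available is polynomial in $n$, the number of hypersequents of bounded size is at most $(g^N(f(n)))^{p(n)}$ for some polynomial $p$, whence $\sum_{i=0}^N|S_i|$ is bounded by a primitive recursive function of $N$ and $n$.

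The crux is bounding $N$ via the length function theorem. Because $(S_i)$ grows strictly until it stabilises (Lem.~\ref{lem-stabilises}), for $1\le i\le N$ I pick $h_i\in S_i\setminus S_{i-1}$; the ``there does not exist'' clause of Def.~\ref{def-derive-sets} forces $h_i\not\qo h_j$ for $i<j$. Passing through the $\#$ function and Lem.~\ref{lem:corr}, the images $h_0^\#,\ldots,h_N^\#$ form a bad sequence over $\maj^{d+1}$ on $\pow{d}^{d+1}$, and since $\|h_i^\#\|\le\maxelesize{h_i}\le g^i(f(n))$ this bad sequence is $(g,f(n))$-controlled. Therefore $N\le L_{\pow{d}^{d+1},\maj^{d+1},g}(f(n))$. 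Here both the dimension $d$ and the tuple-length $d+1$ are part of the input, so Thm.~\ref{thm:length-functions} (in its non-fixed form) places this length function in $\mathbf{F}^*_{\omega^{\omega}}$; as $f$ and $g$ are primitive recursive, Lem.~\ref{lem:closed-pr-functions} yields $N$, as a function of $n$, in $\mathbf{F}^*_{\omega^{\omega}}$. Feeding this back through the primitive recursive size bounds and invoking Lem.~\ref{lem:closed-pr-functions} once more shows the running time lies in $\mathbf{F}^*_{\omega^{\omega}}$, so the problem is in $\mathbf{F}_{\omega^{\omega}}$.

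The main obstacle I anticipate is the second step: manufacturing a genuine control function $g$ that is \emph{independent of $n$} yet still dominates the growth $\maxelesize{S_i}\mapsto\maxelesize{S_{i+1}}$, since the naive bound coming from $S$-{\thin} entangles $\maxelesize{S_i}$ with $n$ and the absorption relies on $n<g^i(f(n))$ holding along the whole sequence. The remainder is bookkeeping, provided one correctly identifies the normed wqo $(\pow{d}^{d+1},\maj^{d+1},\|\cdot\|)$ and observes that the growth of $d=|\Omega|$ with the input is precisely what lifts the bound from a fixed level $\mathbf{F}^*_{\omega^{d}}$ up to $\mathbf{F}^*_{\omega^{\omega}}$.
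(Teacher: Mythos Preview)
Your proposal is correct and follows essentially the same route as the paper: you bound $\maxelesize{S_i}\le g^i(f(n))$ via the $S$-{\thin} estimate and the absorption $n<g^i(f(n))$, then extract a $(g,f(n))$-controlled bad sequence over $\maj^{d+1}$ from representatives $h_i\in S_i\setminus S_{i-1}$ and invoke Thm.~\ref{thm:length-functions} and Lem.~\ref{lem:closed-pr-functions} exactly as the paper does. The anticipated obstacle you flag---decoupling the control function from $n$ by squaring inside $f'$---is precisely the device the paper uses.
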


\subsection{Hypersequent substructural logics with contraction}

Let us relate the running time of the decision procedure to the maximum length of controlled
bad sequences over the minoring ordering. 

Let~$\C=\HFLec + R$ be an analytic structural rule extension and~$h$ the arbitrary input hypersequent. Also let~$\Omega$ be the set of subformulas of $h$. Set $d := |\Omega|$
and $n := \maxelesize{h}$.

As described in Sec.~\ref{sec-FLec-decidability}, we construct a backward proof search tree rooted at~$h$
and then check if the tree contains a subtree that is a derivation of~$h$. Let $S_i$ denote the 
set of nodes at height $i$ from the root of the tree. It is clear that given the set $S_i$,
we can compute the set $S_{i+1}$ in exponential time. Also, notice that checking if the proof tree contains a subtree that is a derivation of~$h$ is at worst exponential in the size of tree. 
Letting $N$ be the length of the longest branch in the proof search tree, it then follows
that the running time of the algorithm is a primitive recursive function of $\sum_{i=0}^N |S_i|$ and $n$.

For any rule instance of $\C$, the size of the premises can be bounded by a fixed polynomial~$g$ (determined by~$\C$) in terms of the size of the conclusion. 
Hence, it can then be easily verified that if $h' \in S_i$ then $\maxelesize{h'} \le g^i(n)$ and hence $\maxelesize{S_i} \le g^i(n)$. 
By the same argument as in the previous subsection, we conclude that  $\sum_{i=0}^N |S_i|\le |\{h' \,|\, \maxelesize{h'} \le g^N(f(n)) \}|$ can be upper-bounded by a primitive recursive function of 
$N$ and $n$. 

Now, as argued in Sec.~\ref{sec-FLec-decidability}, every branch in the backward proof search tree
corresponds to a bad sequence on $\pow{d}^{(d+1)}$ under the $(d+1)$-minoring ordering. Further, we have seen that if $h'$ is a hypersequent at height $i$ from the root,
then $\|h'\| \le \maxelesize{h'}\leq g^i(n)$. It follows that every branch in the proof search tree is a $(g,n)$-controlled bad
sequence on $\pow{d}^{(d+1)}$ under the $(d+1)$-minoring ordering. Hence, by Thm.~\ref{thm:length-functions}, the length of the longest
branch $N$ under all inputs can be upper-bounded by a function in the class $\mathbf{F}^*_{\omega^{\omega}}$.

By Lem.~\ref{lem:closed-pr-functions} the running time is upper-bounded by a function in $\mathbf{F}^*_{\omega^{\omega}}$. Therefore:

\begin{theorem}
The decision problem for every analytic structural rule extension~$\HFLec + R$ is in $\mathbf{F}_{\omega^{\omega}}$.
\end{theorem}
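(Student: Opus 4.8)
The plan is to follow the same template as the weakening case of the previous subsection, but now reading rules bottom-up in the backward proof search tree of Section~\ref{sec-FLec-decidability} and measuring badness against the minoring order. First I would fix $\C = \HFLec + R$, let $\Omega$ be the set of subformulas of the input $h$, and set $d := |\Omega|$ and $n := \maxelesize{h}$. Writing $S_i$ for the set of nodes at height $i$ in the (finitely branching, irredundant) proof search tree and $N$ for the length of its longest branch, I would observe---exactly as in the weakening case---that $S_{i+1}$ is computable from $S_i$ in exponential time and that testing whether the tree contains a derivation subtree is exponential in the tree's size. Hence the overall running time is a primitive recursive function of $\sum_{i=0}^N |S_i|$ and $n$, and it suffices to bound both of these within the fast-growing hierarchy.

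The next step is to control the growth of hypersequent sizes down a branch. Because each analytic structural rule schema in $\C$ has a linear conclusion and the subvariable property, every schematic-variable of a premise already occurs in the conclusion, so the size of each premise of a rule instance is bounded by a fixed polynomial $g$ (determined by $\C$) in the size of its conclusion---contraction, for instance, merely duplicates a single formula occurrence. I would verify by induction on $i$ that every $h' \in S_i$ satisfies $\maxelesize{h'} \le g^i(n)$, and hence $\maxelesize{S_i} \le g^i(n)$. Counting the hypersequents of bounded size over a polynomially bounded symbol alphabet then gives $\sum_{i=0}^N |S_i| \le |\{h' : \maxelesize{h'} \le g^N(n)\}| \le (g^N(n))^{p(n)}$ for a suitable polynomial $p$; since $g$ is primitive recursive this is a primitive recursive function of $N$ and $n$, reducing everything to a bound on $N$.

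To bound $N$, I would invoke the analysis of Section~\ref{sec-FLec-decidability}: irredundancy guarantees that for any two hypersequents $h_i, h_j$ on a branch with $i < j$ one has $h_i \not\qohyp h_j$, and via the $\#$ encoding this translates to $h_i^\# \not\min^{d+1} h_j^\#$, so every branch yields a bad sequence over $\pow{d}^{(d+1)}$ under $\min^{d+1}$. Combined with $\|h_i^\#\| \le \maxelesize{h_i} \le g^i(n)$ from the previous step, each such branch is in fact a $(g,n)$-controlled bad sequence, so $N$ is at most the length of the longest such sequence. Applying Theorem~\ref{thm:length-functions} with $d$ and $n$ both varying then bounds $N$ by a function in $\mathbf{F}^*_{\omega^{\omega}}$, and composing this with the primitive recursive bounds above via Lemma~\ref{lem:closed-pr-functions} places the running time---and hence the decision problem---in $\mathbf{F}_{\omega^{\omega}}$. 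I expect the only genuinely delicate point to be the control-function bookkeeping: pinning down a primitive recursive $g$ that bounds premise size in terms of conclusion size uniformly over all rule instances, and confirming that the $\not\qohyp$-to-$\not\min^{d+1}$ correspondence supplied by Section~\ref{sec-FLec-decidability} makes each branch genuinely $(g,n)$-controlled so that Theorem~\ref{thm:length-functions} applies verbatim.
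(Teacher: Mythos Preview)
Your proposal is correct and follows essentially the same approach as the paper: define $S_i$ as the nodes at height $i$ in the irredundant backward proof search tree, bound premise sizes by a fixed polynomial $g$ of the conclusion size to get $\maxelesize{S_i}\le g^i(n)$, translate each branch via $\#$ into a $(g,n)$-controlled bad sequence under $\min^{d+1}$, and then invoke Theorem~\ref{thm:length-functions} and Lemma~\ref{lem:closed-pr-functions} exactly as you describe. The only cosmetic difference is that the paper carries over the notation $g^N(f(n))$ from the weakening subsection when bounding $\sum_{i=0}^N |S_i|$, whereas your $g^N(n)$ matches the bound actually derived here.
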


\emph{Remark:} If $\C$ is simply $\FLec$ then the backward proof search tree that we construct will only contain sequents (not hypersequents).
Hence, from the backward proof search tree we will extract a controlled bad sequence over $\pow{d}^{(d+1)}$ where each coordinate of each
element in the sequence is a singleton set. It follows then that the sequence is actually a controlled bad sequence over $(\mathbb{N}^d)^{(d+1)}$ under
the usual product ordering. By~\cite{LICS} we then get a $\mathbf{F}_{\omega}$ upper bound for the problem, which matches
the analysis of $\FLec$ by~\cite{Urq99}.

\section{Conclusion}

Lower bounds for the considered logics is an intriguing problem that requires a different approach, namely the embedding of a problem with known complexity into the logic. Lower bounds for~$\MTL$ are of particular interest.
The $\mathbf{F}_{\omega^\omega}$ upper bound (contrast this with the $\mathbf{F}_{\omega}$ membership of~$\FLec$) is clearly related to the move from sequents to hypersequents. Although $\FLec$ has a sequent calculus with the subformula property, a hypersequent calculus is essential for most of the extensions that we consider. This motivates the search for an extension of~$\FLec$ that is in $\mathbf{F}_{\omega^\omega}\setminus \mathbf{F}_{\omega}$.

We have seen how proof search for hypersequent substructural logics can be terminated finitely by exploiting the weakening and contraction rules to prune the search tree. How about in the absence of these rules? We observe that the decision problem for uninorm (fuzzy) logic (its hypersequent calculus is $\HFLe + (com)$) is open. Note also recent work~\cite{GalStJ20} identifying many extensions of~$\FLe$ whose derivability/deducibility problem is undecidable.

%\section{Acknowledgement}
%
%Work supported by the FWF projects I 2982, P32684-N and P 33548. Ramanayake would like to acknowledge the financial support of the CogniGron research center and the Ubbo Emmius Funds (Univ. of Groningen).

\bibliographystyle{plain}
\bibliography{mybib-FLec-2021-Apr-18}

%\section{Appendix}
%
%\subsection{Explicit form of analytic structural rule schema}\label{App-analytic-rule}
%
%An \emph{analytic structural rule schema} has the form below and it is built from the hypersequent-variable~$H$, pairwise distinct succedent-variables~$\{\Pi_{i}|i\in I\}$, and pairwise distinct multiset-variables $\{Y_{i}| i\in I\}\cup \{X_{is} | i\in I, 1\leq s\leq s_i\}\cup \{Z_{jt} | j\in J, 1\leq t\leq t_j\}$. Each~$\mathcal{L}_{ik}$ ($i\in I, k\in K_i$) and~$\mathcal{M}_{l}$ ($l\in L$) is a list from the multiset-variables.
%\begin{equation}\label{analytic-rule}
%\text{
%\AxiomC{$\{ H \VL Y_{i}, \mathcal{L}_{ik}\Ra \Pi_{i} \}_{i\in I, k\in K_{i}}$}
%\AxiomC{$\{ H \VL \mathcal{M}_{l}\Ra \}_{l\in L}$}
%\BinaryInfC{$H \VL Y_{i},X_{i1},\ldots,X_{i s_i}\Ra\Pi_{i} (i\in I) \VL Z_{j1},\ldots,Z_{j t_j}\Ra (j\in J)$}
%\DisplayProof
%}
%\end{equation}

\end{document}